\definecolor{myurlcolor}{rgb}{0,0,0.7}
\newcommand{\tinyspace}{\mspace{1mu}}
\newcommand{\abs}[1]{\left\lvert\tinyspace #1 \tinyspace\right\rvert}
\renewcommand{\t}{{\scriptscriptstyle\mathsf{T}}}
\newcommand{\setft}[1]{\mathrm{#1}}
\newcommand{\unitary}[1]{\setft{U}\left(#1\right)}
\def\real{\mathbb{R}}
\def\natural{\mathbb{N}}
\def\I{\mathbb{1}}
\def \dif {\mathrm{d}}
\def \diag {\mathrm{diag}}
\def \Ad {\mathrm{Ad}}
\newenvironment{mylist}[1]{\begin{list}{}{
    \setlength{\leftmargin}{#1}
    \setlength{\rightmargin}{0mm}
    \setlength{\labelsep}{2mm}
    \setlength{\labelwidth}{8mm}
    \setlength{\itemsep}{0mm}}}
    {\end{list}}
\def\ot{\otimes}
\newcommand{\iinner}[2]{\langle #1 | #2\rangle}
\newcommand{\out}[2]{| #1\rangle\langle #2 |}
\newcommand{\Innerm}[3]{\left\langle #1 \left| #2 \right| #3 \right\rangle}
\newcommand{\pa}[1]{(#1)}
\newcommand{\Pa}[1]{\left(#1\right)}
\newcommand{\Br}[1]{\left[#1\right]}
\newcommand{\set}[1]{\{#1\}}
\newcommand{\Set}[1]{\left\{#1\right\}}
\newcommand{\ket}[1]{|#1\rangle}
\def\Jamiolkowski{J}
\newcommand{\jam}[1]{\Jamiolkowski\pa{#1}}
\DeclareMathOperator{\trace}{Tr}
\newcommand{\ptr}[2]{\trace_{#1}\pa{#2}}
\newcommand{\Ptr}[2]{\trace_{#1}\Pa{#2}}
\newcommand{\Tr}[1]{\Ptr{}{#1}}
\def\cE{\mathcal{E}}
\def\cF{\mathcal{F}}\def\cH{\mathcal{H}}
\def\cP{\mathcal{P}}
\def\cU{\mathcal{U}}
\def\bP{\mathbf{P}}\def\bQ{\mathbf{Q}}
\def\rF{\mathrm{F}}
\def\rS{\mathrm{S}}
\newcommand{\cmp}{Comm. Math. Phys.~}
\newcommand{\jmp}{J. Math. Phys.~}
\newcommand{\jpa}{J. Phys. A~}
\newcommand{\prl}{Phys. Rev. Lett.~}
\newcommand{\pra}{Phys. Rev. A~}
\newcommand{\pre}{Phys. Rev. E~}
\newcommand{\sci}{Science~}
\newcommand{\aop}{Ann. Phys.~}
\newtheorem{thrm}{Theorem}[section]
\newtheorem{lem}[thrm]{Lemma}
\newtheorem{prop}[thrm]{Proposition}
\theoremstyle{definition}
\newtheorem{remark}[thrm]{Remark}
\numberwithin{equation}{section}
\newcounter{questionnumber}
\begin{document}

\title{Average entropy of a subsystem over a global unitary orbit of a mixed bipartite state}

\author{Lin Zhang$^1$\footnote{E-mail: godyalin@163.com; linyz@zju.edu.cn}\,\, and\,  Hua Xiang$^2$\footnote{E-mail: hxiang@whu.edu.cn}\\
  {\it\small $^1$Institute of Mathematics, Hangzhou Dianzi University, Hangzhou 310018, PR~China}\\
  {\it\small $^2$School of Mathematics and Statistics, Wuhan University, Wuhan 430072, PR~China}}
\date{}
\maketitle
\maketitle \mbox{}\hrule\mbox\\
\begin{abstract}
We investigate the average entropy of a subsystem within a global
unitary orbit of a given mixed bipartite state in the
finite-dimensional space. Without working out the closed-form
expression of such average entropy for the mixed state case, we
provide an analytical lower bound for this average entropy. In
deriving this analytical lower bound, we get some useful by-products
of independent interest. We also apply these results to estimate
average correlation along a global unitary orbit of a given mixed
bipartite state. When the notion of von Neumann entropy is replaced
by linear entropy, the similar problem can be considered also, and
moreover the exact average linear entropy formula is derived for a
subsystem over a global unitary orbit of a mixed
bipartite state.\\~\\
\textbf{Keywords:} quantum state; unitary orbit; average entropy;
average correlation; Page's formula
\end{abstract}
\maketitle \mbox{}\hrule\mbox

\section{Introduction}

In 1978, Lubkin \cite{Lub1978} proposed a method of approximating
the average entropy for a subsystem of a finite-dimensional quantum
system in a global pure bipartite state by expanding the entropy as
a series in terms of the average traces of powers of the system's
reduced density operator, but the convergence of this series was
never established. However, the author of recent paper
\cite{Dyer2014} found an exact closed form expression for the
average traces, in which Dyer gave a characterization of the
convergence of the series.

In fact, Page conjectured in \cite{Page1993} that if a quantum
system of Hilbert space dimension $mn$ is in a random pure bipartite
state, the average entropy of a subsystem of dimension $m\leqslant
n$ should be given by the simple and elegant formula
\begin{eqnarray}\label{eq:Pageformula}
S_{m,n} = H_{mn}- H_n - \frac{m-1}{2n},
\end{eqnarray}
where $H_k:=\sum^k_{j=1}\frac1j$ is the $k$-th harmonic number. The
average entropy $S_{m,n}$ in \eqref{eq:Pageformula} is also served
as a way of understanding the information in black hole radiation.
This formula was first proved by Foong and Kanno \cite{FK1994} by
using Fourier transform, and next by S\'{a}nchez-Ruiz \cite{SR1995}
and by Sen \cite{Sen1996} by using random matrix theory connected
with generalized Laguerre polynomials. Some years later, Lachal
\cite{AL2006} used a probabilistic approach to give a re-derivation
of Page's formula. Recently, Zhang \cite{lin2017} has shown that if
a quantum system of Hilbert space dimension $mn$ is in a random pure
bipartite state, the average diagonal entropy of a subsystem of
dimension $m\leqslant n$ should be given by the simple and elegant
formula
\begin{eqnarray}\label{eq:Linformula}
S^D_{m,n} = H_{mn}- H_n.
\end{eqnarray}
Based on the above mentioned formulas, i.e. \eqref{eq:Pageformula}
and \eqref{eq:Linformula}, he derives quickly that the average
coherence of a subsystem of dimension $m\leqslant n$ is given by
$S^D_{m,n}-S_{m,n}=\frac{m-1}{2n}$ (see also \cite{LUA2017} for
another approach).

We know that a random pure state can be generated by a unitary
operator chosen uniformly according to Haar measure $\mu$. Thus the
above problem can be equivalently described as follows: Consider a
complex quantum system $AB$ which consists of two subsystems $A$ and
$B$. For a given pure state $\rho_{AB}=\out{\psi_{AB}}{\psi_{AB}}$,
\begin{eqnarray*}
\int_{\unitary{d_Ad_B}} \rS(\ptr{B}{U\rho_{AB}U^\dagger})\dif\mu(U)=
H_{d_Ad_B} - H_{d_B} - \frac{d_A-1}{2d_B}\quad(d_A\leqslant d_B),
\end{eqnarray*}
where $\rS(\rho):=-\Tr{\rho\ln\rho}$ is the von Neumann entropy.
Along this line, in a very recent paper \cite{Ch2014}, Christandl
\emph{et al} computed exactly the eigenvalue distributions of
reduced density matrices of multipartite pure state by employing
symplectic geometric method. Here we ask: can one have an analogous
formula for a given mixed bipartite state $\rho_{AB}$? This question
has been paid  no or little attention to the best of our knowledge.
In this note, we will make an attempt to determine the average
entropy of a subsystem along a global unitary orbit of a given mixed
bipartite state.

In fact, recently, many researchers studied various problems along a
unitary orbit of a quantum state. For example, the total correlation
attained between the subsystems of a bipartite quantum system is
constrained if the bipartite system undergoes global unitary
evolutions. The authors of Ref~\cite{Jevtic2012a,Jevtic2012b}
investigated related problems motivated by some considerations in
the field of quantum thermodynamics. They have not only obtained the
value of the maximal quantum mutual information (QMI), but also the
maximum QMI state in the balanced bipartite quantum systems. Unlike
the maximum QMI case, finding the minimum QMI state on the unitary
orbit is more difficult than finding the maximum QMI state in
general. Luckily, they completely solved this minimum QMI state for
two-qubit case. Besides, Zhang and Fei investigated relative entropy
and fidelity between two unitary orbits of two states, respectively
\cite{ZF2014}. They have obtained a lot of compact expressions for
some extremal values under consideration. We have already known that
any bipartite quantum state can be diagonalized under the global
unitary conjugation but cannot be achieved in general under the
local unitary conjugations. Because of this, Zhang \emph{et al}
considered the fidelity between one bipartite quantum state and
another one undergoing local unitary dynamics \cite{ZCB2015}. The
problems are related to the geometric measure of entanglement and
the distillability problem. Besides the above mentioned works, there
is also one, where Oszmaniec and Ku\'{s} \cite{OK2014} estimated the
fraction of noncorrelated states within a unitary orbit of a given
mixed bipartite state. Based on the obtained result, they have
proven that the fraction of noncorrelated states tends to zero
exponentially fast with the dimension of the relevant Hilbert space
whenever the purity exceeding some critical value. Consequently, a
state within a global unitary orbit of a given bipartite state is
asymptotically a correlated one. Motivated by this and Page's
average entropy formula, in the present paper, we will consider the
calculation of average entropy along a global unitary orbit of a
given mixed bipartite state. On the one hand, our attempt made here
can be seen as a generalization of Page's formula. On the other
hand, the obtained result can be viewed as an estimate for the
entanglement within a global unitary orbit of a given mixed
bipartite state since a random state within a global unitary orbit
of a given bipartite state is asymptotically a correlated one. In
deriving our main results, we get a lot of by-products which is of
independent interest.

The paper is organized as follows. In Sect.~\ref{sect:ave-ent}, we
introduce the notion of unitary orbit for quantum states. In
calculating the average entropy of a subsystem along a global
unitary orbit of a given mixed bipartite state, we need to calculate
some integral (Lemma~\ref{lem:sixth-moment}), which is very
important in this paper, over unitary groups. Then, some
implications are discussed in Sect.~\ref{sect:implications}.
Specifically, we have obtained the following results: (i) For a
bipartite quantum system, maximally mixed state can be represented
by a uniform probability mixing of tensor products of two marginal
states at each point within the global unitary orbit of any given
mixed bipartite state; (ii) We use the relative entropy and fidelity
as a figure of merit for correlation, and estimate the average
correlation along a global unitary orbit from above and below since
analytical calculation seems unavailable; (iii) A detailed research
is given to the sum of average entropies of two subsystems within a
global unitary orbit. We summarize the main contents of this paper
in Sect.~\ref{sect:conclusion}. Finally, we present the detailed
proofs of Lemma~\ref{lem:sixth-moment}, Theorem~\ref{th:A1},
Proposition~\ref{prop:ave-corr}, and Theorem~\ref{th:A2} in
Appendix.

\section{Average entropy of a subsystem along a global unitary
orbit}\label{sect:ave-ent}

Define \emph{unitary orbit} $\cU_\rho$ of a given quantum state
$\rho$ on a $d$-dimensional Hilbert space $\cH_d$ as follows:
\begin{eqnarray*}
\cU_\rho:= \Set{U\rho U^\dagger: U\in\unitary{\cH_d}} \textcolor{blue}{.}
\end{eqnarray*}
Choose any $\rho'_{AB}\in \cU_{\rho_{AB}}$ for a given bipartite
state $\rho_{AB}$ and $\rho'_A = \ptr{B}{\rho'_{AB}}$. Let
$\Phi=\trace_B$, then $\Phi^*=\ot\I_B$. Assume that
$\Gamma:=\Phi^*\Phi$. That is $\Gamma(X)=\ptr{B}{X}\ot\I_B$.

In what follows, we compute the average entropy along the unitary
orbit $\cU_{\rho_{AB}}$ of $\rho_{AB}$. By the definition of
entropy, $\rS(\rho'_A) = -\Tr{\rho'_A\ln\rho'_A}$, which can be
rewritten as:
\begin{eqnarray*}
\rS(\rho'_A) = - \Tr{\rho_{AB}\ln \Br{U^\dagger
\Gamma(U\rho_{AB}U^\dagger)U}}.
\end{eqnarray*}
Note that $-\ln x = \sum^{\infty}_{n=1} \frac1n (1-x)^n$, and it
follows that
\begin{eqnarray*}
-\ln \Br{U^\dagger \Gamma(U\rho_{AB}U^\dagger)U} &=&
\sum^{\infty}_{n=1} \frac1n \Pa{\I_A\ot\I_B-\Br{U^\dagger
\Gamma(U\rho_{AB}U^\dagger)U}}^n\\
&=&\sum^{\infty}_{n=1} \frac1n U^\dagger\Br{\Gamma(UTU^\dagger)}^nU,
\end{eqnarray*}
where $T:=\I_A\ot\I_B/{d_B}-\rho_{AB}$ with $\Tr{T}=d_A-1$. Thus
\begin{eqnarray*}
\int_{\unitary{d}}\rS(\rho'_A)\dif\mu(U) = \sum^{\infty}_{n=1}
\frac1n \Tr{\rho \int
U^\dagger\Br{\Gamma(UTU^\dagger)}^nU\dif\mu(U)},
\end{eqnarray*}
where $d=d_Ad_B$. Denote
\begin{eqnarray*}
a_n := \Tr{\rho \int
U^\dagger\Br{\Gamma(UTU^\dagger)}^nU\dif\mu(U)}.
\end{eqnarray*}
Hence
\begin{eqnarray}
\ln(d_A)\geqslant\int\rS(\rho'_A)\dif\mu(U) = \sum^{\infty}_{n=1}
\frac{a_n}n.
\end{eqnarray}
We see from this formula
\begin{eqnarray}
\lim_{n\to\infty}a_n=0.
\end{eqnarray}
Theoretically, although the specific formula for $a_n$ can be
obtained \cite{Benoit,Collins,LZ2014}, however the specific form of
$a_n$ will be rather complicated. We can truncate this series (via
keeping the first $n$ terms) to obtain an approximation about the
average entropy. That is,
\begin{eqnarray}
\ln(d_A)\geqslant\int\rS(\rho'_A)\dif\mu(U) \geqslant a_1 +\frac12
a_2 + \cdots+\frac1na_n.
\end{eqnarray}
The more larger $n$ is, the more tighter lower bound is. But
however, the computing about $a_n$ is becoming very complicated when
$n\geqslant 3$. As an illustration, we truncate first two terms as
an estimate below. More challenging task is to determine the exact
value of the above series.\\~\\
(i). For $n=1$, we have \cite{LZ2014}
\begin{eqnarray*}
\int U^\dagger\Gamma(UTU^\dagger)U\dif\mu(U) =
\frac{d\Tr{\Gamma(\I_d)}-\Tr{\Gamma}}{d(d^2-1)}\Tr{T}\I_d +
\frac{d\Tr{\Gamma} - \Tr{\Gamma(\I_d)}}{d(d^2-1)}T.
\end{eqnarray*}
Since $\Gamma=\Phi^*\Phi$, choose any orthonormal basis
$\set{\ket{\varphi_j}: j=1,\dots,d_B}$ for subsystem space of $B$,
it follows that
$$
\Gamma(Z)=\sum^{d_B}_{i,j=1}M_{ij}ZM^\dagger_{ij},
$$
where $M_{ij}=\I_A\ot\out{\varphi_i}{\varphi_j}=M^\dagger_{ji}$,
which implies that $\Tr{\Gamma(\I_d)} =
d_Ad^2_B,\Tr{\Gamma}=d^2_Ad_B$. Thus
\begin{eqnarray}\label{eq:1st-matrix-int}
\int U^\dagger\Gamma(UTU^\dagger)U\dif\mu(U) =
\frac{d_A-1}{d^2-1}\Br{(1+dd_B)\I_A\ot\I_B - (d+d_B)\rho_{AB}}.
\end{eqnarray}
Therefore,
\begin{eqnarray}
a_1 = \frac{d_A-1}{d^2-1}\Br{(1+dd_B) - (d+d_B)\Tr{\rho^2_{AB}}}.
\end{eqnarray}
(ii). For $n=2$,
\begin{eqnarray*}
\int U^\dagger\Br{\Gamma(UTU^\dagger)}^2U\dif\mu(U)
&=&\sum^{d_B}_{i,j,k,l=1}\int U^\dagger M_{ij}UTU^\dagger
(M_{ji}M_{kl})UTU^\dagger M_{lk}U\dif\mu(U)\\
&=&\sum^{d_B}_{i,j,l=1}\int U^\dagger M_{ij}UTU^\dagger
M_{jl}UTU^\dagger M_{li}U\dif\mu(U).
\end{eqnarray*}
In order to calculate above integral, we will need the following
result, described as follows:
\begin{lem}\label{lem:sixth-moment}
It holds that
\begin{eqnarray}\label{eq:mainprop}
\int UAU^\dagger BUXU^\dagger CUDU^\dagger \dif\mu(U)=
\mu_1\cdot\I_d+\mu_2\cdot BC+\mu_3\cdot CB + \mu_4\cdot B +
\mu_5\cdot C,
\end{eqnarray}
where the coefficients $\mu_j(j=1,\ldots,5)$ can be found in the
Appendix A.
\end{lem}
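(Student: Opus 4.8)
The plan is to treat \eqref{eq:mainprop} as a degree-$(3,3)$ moment of the unitary group and to evaluate it by Weingarten calculus. Writing the left-hand side in components, the matrix entry $\Pa{UAU^\dagger BUXU^\dagger CUDU^\dagger}_{pq}$ expands into a sum over internal indices of a product of three factors $U$ and three factors $\overline{U}$, weighted by the entries of $A$, $X$, $D$ attached to the \emph{column} indices of the $U$'s and $\overline{U}$'s, and by the entries of $B$, $C$ attached to the \emph{row} indices. Concretely, the three $U$'s carry row indices $p, i_2, i_4$ and the three $\overline{U}$'s carry row indices $i_1, i_3, q$, while $B_{i_1 i_2}$ and $C_{i_3 i_4}$ glue consecutive internal row indices together.

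First I would apply the Weingarten formula, which replaces $\int \prod_k U_{r_k c_k}\prod_k \overline{U_{r'_k c'_k}}\,\dif\mu(U)$ by a sum over pairs $(\sigma,\tau)\in S_3\times S_3$ of $\operatorname{Wg}(\sigma\tau^{-1},d)$ times products of Kronecker deltas that pair the row indices according to $\sigma$ and the column indices according to $\tau$. This yields $36$ terms. In each term the \emph{column} deltas contract $A$, $X$, $D$ into a product of traces whose pattern is fixed by the cycle type of $\tau$ (for instance $\Tr{A}\Tr{X}\Tr{D}$, $\Tr{AX}\Tr{D}$, or $\Tr{AXD}$), while the \emph{row} deltas chain the factors $B$ and $C$ together according to $\sigma$.

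The structural heart of the argument is that, since exactly one $B$ and one $C$ occur, the open path joining the external indices $p$ and $q$ can only spell one of the words $\I$, $B$, $C$, $BC$, or $CB$; any remaining closed loops collapse into scalar traces such as $\Tr{B}$, $\Tr{C}$, $\Tr{BC}$ that are absorbed into the coefficients. One can anticipate a priori why only these five structures survive: the integral is conjugation-covariant, $\int UAU^\dagger (V^\dagger BV)UXU^\dagger(V^\dagger CV)UDU^\dagger\dif\mu(U)=V^\dagger\Br{\int UAU^\dagger B\,UXU^\dagger C\,UDU^\dagger\dif\mu(U)}V$ for every unitary $V$, and it is linear in $B$ and in $C$ separately. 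By the first fundamental theorem of invariant theory the only conjugation-equivariant maps $(B,C)\mapsto M$ that are linear in each argument are spanned by $\Tr{B}\Tr{C}\I$, $\Tr{BC}\I$, $\Tr{B}\,C$, $\Tr{C}\,B$, $BC$, and $CB$, which collapse precisely into the five terms of \eqref{eq:mainprop}.

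Finally I would sort the $36$ terms by which of the five matrix words they feed, and collect the accompanying scalar factors — each a product of a Weingarten value $\operatorname{Wg}(\sigma\tau^{-1},d)$ with the relevant trace monomial in $A,X,D,B,C$ — into the coefficients $\mu_1,\dots,\mu_5$, using the explicit $S_3$ Weingarten values at the identity, at a transposition, and at a $3$-cycle (which depend only on $d$). The main obstacle is not conceptual but combinatorial: correctly matching each of the $36$ index patterns to its trace structure and matrix word, tracking which permutation pairs generate closed loops, and simplifying the resulting rational functions of $d$. To contain errors I would organize the computation by the cycle type of $\tau$ (governing the $A,X,D$ traces) refined by $\sigma$ (governing the $B,C$ word), and cross-check the outcome against the $n=1$ formula \eqref{eq:1st-matrix-int} in an appropriate degenerate limit.
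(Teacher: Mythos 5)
Your proposal follows essentially the same route as the paper's Appendix A: expand the matrix entries of the integrand, apply the Weingarten formula to the degree-$(3,3)$ moment, enumerate the $36$ pairs $(\pi,\sigma)\in S_3\times S_3$, read off the trace monomial in $A,X,D$ from $\sigma$ and the matrix word in $B,C$ from $\pi$, and collect coefficients using the three $S_3$ Weingarten values $\mathrm{Wg}(1,1,1)$, $\mathrm{Wg}(2,1)$, $\mathrm{Wg}(3)$. Your a priori invariant-theory argument for why only the five structures $\I_d$, $B$, $C$, $BC$, $CB$ (with trace-monomial coefficients) can appear is a sound addition the paper omits, but it supplements rather than replaces the same case-by-case computation.
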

\begin{proof}
See \textbf{Appendix A}.
\end{proof}
This leads to the following result:
\begin{thrm}\label{th:A1}
The average von Neumann entropy of a subsystem within the global
unitary orbit of a generic mixed bipartite state $\rho_{AB}$, a
density matrix on $\cH_d\equiv\cH_{d_A}\ot\cH_{d_B}$, is bounded
from below by a quantity:
\begin{eqnarray}
\int\rS(\rho'_A)\dif\mu(U) \geqslant
a_1+\cdots+\frac{a_n}n\quad(\forall n\in\natural),
\end{eqnarray}
where
\begin{eqnarray*}
a_n = \Tr{\rho \int
U^\dagger\Br{\Gamma(UTU^\dagger)}^nU\dif\mu(U)}\geqslant a^n_1.
\end{eqnarray*}
In particular, the first two terms can be given specifically:
\begin{eqnarray}
a_1 &=& \frac{d_A-1}{d^2-1}\Br{(1+dd_B) - (d+d_B)\Tr{\rho^2_{AB}}},\\
a_2 &=& \Pa{f + g\frac1{d_B} + h\frac1{d^2_B}} -
\Pa{g+h\frac2{d_B}}\Tr{\rho^2_{AB}} + h\Tr{\rho^3_{AB}},
\end{eqnarray}
where $f,g,h$ are given below:
\begin{eqnarray}
f&=& \frac{d_A(d^2_A-1)(d^2_B-1)}{(d^2-1)(d^2-4)}\Pa{d_A+d_B\Tr{\rho^2_{AB}}-2}+\frac{(d^2-2d^2_A-2)(d^2_B-1)}{(d^2-1)(d^2-4)}(d_A-1)^2,\\
g&=& = \frac{2d(d_A-1)(d^2_A-1)(d^2_B-1)}{(d^2-1)(d^2-4)},\\
h&=& = \frac{(d^2_A-1)(d^2_A-4)d^2_B}{(d^2-1)(d^2-4)}.
\end{eqnarray}
Moreover,
\begin{eqnarray}
\int\rS(\rho'_A)\dif\mu(U) \geqslant - \ln(1-a_1).
\end{eqnarray}
\end{thrm}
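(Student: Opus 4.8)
The plan is to establish the theorem's pointwise bound $a_n \geqslant a_1^n$ and then deduce the closed-form estimate $-\ln(1-a_1)$ by summing the series $\sum_n a_n/n$ against the Taylor expansion of the logarithm.

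First I would recast $a_n$ in a transparent eigenvalue form. Since $\ptr{B}{\I_{AB}/d_B}=\I_A$ and $\ptr{B}{U\rho_{AB}U^\dagger}=\rho'_A$, one has $\Gamma(UTU^\dagger)=(\I_A-\rho'_A)\ot\I_B$, hence $[\Gamma(UTU^\dagger)]^n=(\I_A-\rho'_A)^n\ot\I_B$, and therefore
\[
a_n=\int\Tr{\rho_{AB}\,U^\dagger\br{(\I_A-\rho'_A)^n\ot\I_B}U}\dif\mu(U)=\int\Tr{\rho'_A(\I_A-\rho'_A)^n}\dif\mu(U).
\]
Writing $\lambda_1(U),\dots,\lambda_{d_A}(U)$ for the eigenvalues of $\rho'_A=\ptr{B}{U\rho_{AB}U^\dagger}$, this reads $a_n=\int\sum_i\lambda_i(1-\lambda_i)^n\dif\mu(U)$; in particular $a_1=1-\int\Tr{(\rho'_A)^2}\dif\mu(U)$.

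Next I would reveal $a_n\geqslant a_1^n$ as a single Jensen estimate. Because $\sum_i\lambda_i(U)=1$ for every $U$, the recipe ``draw $U$ from Haar measure, then draw an index $i$ with probability $\lambda_i(U)$'' defines a bona fide probability space; let $X:=1-\lambda_i(U)\in[0,1]$ be the associated random variable. Then $\expect\br{X^n}=a_n$ and $\expect\br{X}=a_1$, and convexity of $x\mapsto x^n$ on $[0,\infty)$ gives $a_n=\expect\br{X^n}\geqslant(\expect\br{X})^n=a_1^n$. Finally I would assemble the closed-form bound: each summand $\frac1n\Tr{\rho'_A(\I_A-\rho'_A)^n}$ is nonnegative (the commuting positive factors $\rho'_A$ and $(\I_A-\rho'_A)^n$ have positive-semidefinite product), so Tonelli's theorem permits exchanging $\int$ and $\sum$ and yields $\int\rS(\rho'_A)\dif\mu(U)=\sum_{n\geqslant1}a_n/n$. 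Since the purity obeys $\Tr{(\rho'_A)^2}\geqslant1/d_A$, we have $0\leqslant a_1\leqslant1-1/d_A<1$, so $\sum_{n\geqslant1}a_1^n/n=-\ln(1-a_1)$ converges and the term-by-term bound gives $\int\rS(\rho'_A)\dif\mu(U)\geqslant-\ln(1-a_1)$.

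I expect the only real care to be needed in the convergence bookkeeping: the pointwise validity of the logarithm series when $\rho'_A$ is singular (where the convention $0\ln0=0$ keeps every summand finite) and the legitimacy of the $\int$--$\sum$ exchange, both of which are handled cleanly by the nonnegativity of the terms. The inequality $a_n\geqslant a_1^n$ itself, which is the conceptual core, becomes elementary once the probabilistic reformulation is in hand.
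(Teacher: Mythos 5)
Your argument for the structural claims is correct, and for those claims it is genuinely more elementary than the paper's. The identification $\Gamma(UTU^\dagger)=(\I_A-\rho'_A)\ot\I_B$, hence $a_n=\int\Tr{\rho'_A(\I_A-\rho'_A)^n}\dif\mu(U)$, is exactly right, and your probabilistic reformulation (draw $U$ from Haar measure, then an index $i$ with probability $\lambda_i(U)$) collapses into a single classical Jensen step what the paper's Appendix B does in two: the paper first applies the operator Jensen inequality for the positive unital functional $\varphi_\rho(X)=\Tr{\rho X}$ (citing Hiai--Petz) to get $\Tr{\rho A^n}\geqslant\pa{\Tr{\rho A}}^n$ pointwise in $U$, and then scalar Jensen for the Haar average to conclude $a_n\geqslant a_1^n$; your spectral probability space is, in effect, the proof of the first step fused with the second. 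Your bound $a_1\leqslant 1-1/d_A<1$ via $\Tr{(\rho'_A)^2}\geqslant 1/d_A$ is also legitimate and more direct than the paper's route through the closed form of $a_1$. The convergence bookkeeping (nonnegativity of each summand since $\rho'_A$ and $(\I_A-\rho'_A)^n$ commute, Tonelli for the $\int$--$\sum$ exchange, the $0\ln 0=0$ convention on the kernel of $\rho'_A$, justified because $\supp\rho'_{AB}\subseteq\supp(\rho'_A\ot\I_B)$) is handled properly, and with $a_n\geqslant a_1^n\geqslant0$ both the truncation bounds and $-\ln(1-a_1)$ follow.

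However, as a proof of the theorem as stated your proposal has a genuine gap: it never derives the explicit closed forms of $a_1$ and, more substantially, of $a_2$, i.e.\ the formula $a_2=\pa{f+g/d_B+h/d_B^2}-\pa{g+2h/d_B}\Tr{\rho^2_{AB}}+h\Tr{\rho^3_{AB}}$ with the stated $f,g,h$. The closed form of $a_1$ is a short consequence of the first-moment twirling formula \eqref{eq:1st-matrix-int}, but the $a_2$ computation is the bulk of the paper's proof: one expands $\br{\Gamma(UTU^\dagger)}^2$ into the triple sum $\sum_{i,j,l}U^\dagger M_{ij}UTU^\dagger M_{jl}UTU^\dagger M_{li}U$ and applies the sixth-moment Weingarten formula of Lemma~\ref{lem:sixth-moment} to each term, obtaining coefficients $f(i,j,l),g(i,j,l),h(i,j,l)$ that are summed over $i,j,l$ and evaluated at $\Tr{T}=d_A-1$ and $\Tr{T^2}=(d_A-2)/d_B+\Tr{\rho^2_{AB}}$. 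Nothing in your eigenvalue reformulation can substitute for this: the Haar-induced spectral distribution of $\rho'_A$ for a mixed $\rho_{AB}$ is precisely what is not available in closed form, so the moment computation via Weingarten calculus (or an equivalent) cannot be bypassed. Everything you do prove is sound, and proved by a cleaner route, but the ``in particular'' portion of the theorem is left unestablished.
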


\begin{proof}
See \textbf{Appendix B}.
\end{proof}

\begin{remark}
In particular, for $d_A=d_B=2$, then $d=4$, we have
\begin{eqnarray}
a_1 = \frac35-\frac25\Tr{\rho^2_{AB}},\quad a_2=\frac3{10} -
\frac15\Tr{\rho^2_{AB}}.
\end{eqnarray}
This implies that
\begin{eqnarray}
\ln 2 \geqslant \int\rS(\rho'_A)\dif\mu(U) \geqslant
\max\Set{\frac9{10} - \frac35\Tr{\rho^2_{AB}}, \ln 5 - \ln
\Pa{2+2\Tr{\rho^2_{AB}}}}.
\end{eqnarray}
It is left \emph{open} that the explicit computing of the following
integral for a given mixed bipartite state $\rho_{AB}$:
\begin{eqnarray}
\int \ln \Pa{U^\dagger \Gamma(U\rho_{AB}U^\dagger)U}\dif\mu(U).
\end{eqnarray}
\end{remark}

\begin{remark}
We have already known that for any super-operator $\Xi$ over $\cH_d$
\cite{LZ2014},
\begin{eqnarray}
\int_{\unitary{d}}\dif\mu(U) U^\dagger\Xi(UXU^\dagger)U =
\frac{d\Tr{\Xi(\I_d)}-\Tr{\Xi}}{d(d^2-1)}\Tr{X}\I_d +
\frac{d\Tr{\Xi}-\Tr{\Xi(\I_d)}}{d(d^2-1)}X.
\end{eqnarray}
Now let $d=d_Ad_B$ and $\cH_d=\cH_A\ot\cH_B$ with $\dim(\cH_A)=d_A$
and $\dim(\cH_B)=d_B$. Assume that $X = \rho_{AB}$, a density matrix
on $\cH_A\ot\cH_B$. Fixing an orthonormal basis
$\set{\ket{\psi_{B,j}}:j=1,\ldots,d_B}$ for $\cH_B$. Suppose that
$\Gamma(X) = \ptr{B}{X}\ot\I_B$. Then it can be rewritten as:
$$
\Gamma(X) =
\sum^{d_B}_{i,j=1}(\I_A\ot\out{\psi_{B,i}}{\psi_{B,j}})X(\I_A\ot\out{\psi_{B,j}}{\psi_{B,i}}).
$$
Clearly $\Gamma(\I_A\ot\I_B) = d_B\I_A\ot\I_B$, implying that
$$
\Tr{\Gamma(\I_A\ot\I_B)} = d_Ad^2_B~~\text{and}~~\Tr{\Gamma} =
\sum^{d_B}_{i,j=1}(d_A\delta_{ij})^2= d^2_Ad_B.
$$
From the above discussion, we see that
\begin{eqnarray}
\int_{\unitary{d}}\dif\mu(U) U^\dagger\Gamma(U\rho_{AB}U^\dagger)U =
\frac{dd_B-d_A}{d^2-1}\I_A\ot\I_B + \frac{dd_A-d_B}{d^2-1}\rho_{AB}.
\end{eqnarray}
Denote $\rho'_{AB} = U\rho_{AB}U^\dagger$ and
$\rho'_{A}=\ptr{B}{\rho'_{AB}}$. Then
\begin{eqnarray*}
\Tr{(\rho'_A)^2} &=& \Tr{(\rho'_A\ot\I_B)\rho'_{AB}} =
\Tr{\Gamma(\rho'_{AB})\rho'_{AB}}\\
&=&\Tr{U^\dagger\Gamma(U\rho_{AB}U^\dagger)U\rho_{AB}}.
\end{eqnarray*}
Therefore
\begin{eqnarray*}
\left\langle\Tr{(\rho'_A)^2}\right\rangle :=
\int\dif\mu(U)\Tr{(\rho'_A)^2} = \Tr{\int\dif\mu(U)
U^\dagger\Gamma(U\rho_{AB}U^\dagger)U\rho_{AB}}.
\end{eqnarray*}
That is,
\begin{eqnarray}\label{eq:two-purity}
\left\langle\Tr{(\rho'_A)^2}\right\rangle = \frac{dd_B-d_A}{d^2-1}+
\frac{dd_A-d_B}{d^2-1}\Tr{\rho^2_{AB}}.
\end{eqnarray}
If we use the notion of linear entropy $\rS_L(\rho):=1-\Tr{\rho^2}$,
then we have seen from Eq.~\eqref{eq:two-purity} that the average
\emph{linear entropy} is given precisely by
\begin{eqnarray}
\int\rS_L(\rho'_A)\dif\mu(U) = \frac{(d_A-1)(d_B-1)}{d+1} +
\frac{dd_A-d_B}{d^2-1}\rS_L(\rho_{AB}).
\end{eqnarray}
We see from this that
\begin{eqnarray}
\int\rS_L(\rho'_B)\dif\mu(U) = \frac{(d_A-1)(d_B-1)}{d+1} +
\frac{dd_B-d_A}{d^2-1}\rS_L(\rho_{AB}).
\end{eqnarray}
Moreover for a mixed state $\rho_{AB}$,
$\int\rS_L(\rho'_A)\dif\mu(U)=\int\rS_L(\rho'_B)\dif\mu(U)$ if and
only if $d_A=d_B$. We also see that
\begin{eqnarray}
\int(\rS_L(\rho'_A)+\rS_L(\rho'_B))\dif\mu(U) =
2\frac{(d_A-1)(d_B-1)}{d+1} + \frac{d_A+d_B}{d+1}\rS_L(\rho_{AB}).
\end{eqnarray}
In particular, if $d_A=d_B=2$, then
\begin{eqnarray}
\int\rS_L(\rho'_A)\dif\mu(U) = \int\rS_L(\rho'_B)\dif\mu(U) =
\frac15 + \frac25\rS_L(\rho_{AB}).
\end{eqnarray}
\end{remark}

\section{Some implications}\label{sect:implications}

Some consequences of Lemma~\ref{lem:sixth-moment} can be presented
as follows: Let $\Phi=\trace_B$ and $\Psi=\trace_A$. Then
$\Phi^*=\ot \I_B$ and $\Psi^*=\I_A\ot$. Thus for
$\rho'_{AB}=U\rho_{AB}U^\dagger$,
\begin{eqnarray}
&&\rho'_A\ot \rho'_B =
\Phi^*\Phi(U\rho_{AB}U^\dagger)\Psi^*\Psi(U\rho_{AB}U^\dagger)\\
&&=\sum_{i,j=1}^{d_B}\sum_{\mu,\nu=1}^{d_A}
(\I_A\ot\out{i}{j})U\rho_{AB}U^\dagger(\out{\mu}{\nu}\ot\out{j}{i})U\rho_{AB}U^\dagger(\out{\nu}{\mu}\ot\I_B),
\end{eqnarray}
and it follows that
\begin{eqnarray*}
\int (\rho'_A\ot \rho'_B) \dif\mu(U)
=\sum_{i,j=1}^{d_B}\sum_{\mu,\nu=1}^{d_A}
(\I_A\ot\out{i}{j})\Pa{\int
U\rho_{AB}U^\dagger(\out{\mu}{\nu}\ot\out{j}{i})U\rho_{AB}U^\dagger
dU}(\out{\nu}{\mu}\ot\I_B),
\end{eqnarray*}
leading to the following identity
\begin{eqnarray}
\int (\rho'_A\ot \rho'_B) \dif\mu(U) =
\frac{\I_A}{d_A}\ot\frac{\I_B}{d_B}.
\end{eqnarray}
This amounts to say that
\begin{prop}
For a bipartite quantum system, maximally mixed state can be
represented by a uniform probability mixing of tensor products of
two marginal states at each point within the global unitary orbit of
any mixed bipartite state.
\end{prop}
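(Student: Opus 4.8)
The plan is to bypass the explicit second–moment computation begun in the excerpt and instead exploit the left–invariance of the Haar measure together with a Schur–lemma argument, which also explains transparently why the answer must be the maximally mixed state. Set $M := \int (\rho'_A\ot\rho'_B)\,\dif\mu(U)$. The first step is to record the covariance of the partial trace under local unitaries: for any $X$ on $\cH_A\ot\cH_B$ and any $V_B\in\unitary{\cH_B}$ one has $\ptr{B}{(\I_A\ot V_B)X(\I_A\ot V_B^\dagger)}=\ptr{B}{X}$ by cyclicity within the $B$ factor, and partial trace over $B$ commutes with operators acting only on $A$; combining these gives $V_A\ptr{B}{X}V_A^\dagger=\ptr{B}{(V_A\ot V_B)X(V_A\ot V_B)^\dagger}$ for all $V_A,V_B$. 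Applying this with $X=U\rho_{AB}U^\dagger$ shows that $V_A\rho'_A V_A^\dagger$ is exactly the $A$-marginal of $W\rho_{AB}W^\dagger$, and $V_B\rho'_B V_B^\dagger$ the $B$-marginal, where $W:=(V_A\ot V_B)U$.

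Next I would invoke left–invariance of $\mu$: as $U$ ranges over the Haar measure so does $W=(V_A\ot V_B)U$, so conjugating $M$ by $V_A\ot V_B$ merely reparametrizes the integral. Concretely,
\[
(V_A\ot V_B)\,M\,(V_A\ot V_B)^\dagger=\int\Pa{V_A\rho'_AV_A^\dagger}\ot\Pa{V_B\rho'_BV_B^\dagger}\,\dif\mu(U)=\int(\rho'_A\ot\rho'_B)\,\dif\mu(W)=M,
\]
so $M$ is invariant under conjugation by every local unitary $V_A\ot V_B$.

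The finish is Schur's lemma. Invariance under conjugation by all $V_A\ot\I_B$ means $M$ lies in the commutant of $\unitary{d_A}\ot\I_B$, which is $\I_A\ot\lin{\cH_B}$ because the defining representation of $\unitary{d_A}$ on $\cH_A$ is irreducible; likewise invariance under all $\I_A\ot V_B$ forces $M\in\lin{\cH_A}\ot\I_B$. The intersection is $\complex\cdot(\I_A\ot\I_B)$, so $M=c\,\I_A\ot\I_B$. Taking the trace and using $\Tr{\rho'_A\ot\rho'_B}=\Tr{\rho'_A}\Tr{\rho'_B}=1$ fixes $c=1/(d_Ad_B)$, yielding $M=\tfrac{\I_A}{d_A}\ot\tfrac{\I_B}{d_B}$, which is the claimed statement.

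I do not anticipate a serious obstacle in this symmetry route; the only point requiring care is the Schur step, namely verifying that the commutant of the local unitary group acting by conjugation is one–dimensional. If instead one completes the explicit calculation already set up in the excerpt, the sole remaining integral $\int U\rho_{AB}U^\dagger(\out{\mu}{\nu}\ot\out{j}{i})U\rho_{AB}U^\dagger\,\dif\mu(U)$ is a second moment and evaluates to $\alpha\,\I_d+\beta\,S$ with $S$ the swap operator and $\alpha,\beta$ determined by the traces of the twirled argument against $\I_d$ and $S$; there the obstacle is purely bookkeeping, since one must check that after contracting with $\I_A\ot\out{i}{j}$ on the left and $\out{\nu}{\mu}\ot\I_B$ on the right and summing over $i,j,\mu,\nu$, all $\rho_{AB}$-dependent contributions cancel and only the identity term survives.
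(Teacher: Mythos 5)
Your proof is correct, but it takes a genuinely different route from the paper's. The paper obtains this proposition as a by-product of its explicit moment machinery: it writes $\rho'_A\ot\rho'_B=\Phi^*\Phi(U\rho_{AB}U^\dagger)\,\Psi^*\Psi(U\rho_{AB}U^\dagger)$, expands both maps in the Kraus operators $\I_A\ot\out{i}{j}$ and $\out{\nu}{\mu}\ot\I_B$, and evaluates the resulting second-moment integral $\int U\rho_{AB}U^\dagger(\out{\mu}{\nu}\ot\out{j}{i})U\rho_{AB}U^\dagger\,\dif\mu(U)$ via the Weingarten formula, after which the purity-dependent terms cancel under the contraction and the sums over $i,j,\mu,\nu$. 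Your argument---covariance of the marginals under local unitaries, left-invariance of the Haar measure, and Schur's lemma applied twice to force $M\in\complex\cdot(\I_A\ot\I_B)$---is complete as written, avoids all Weingarten bookkeeping, explains \emph{why} the answer must be maximally mixed independently of the spectrum of $\rho_{AB}$, and generalizes verbatim to multipartite orbits. What the paper's computational route buys instead is the companion identity \eqref{eq:new-id} for $\int U^\dagger(\rho'_A\ot\rho'_B)U\,\dif\mu(U)$, which genuinely depends on $\rho_{AB}$ through $c_0,c_1,c_2$ and is inaccessible to your invariance trick (conjugating back by $U^\dagger$ destroys the covariance you exploit); that identity needs the sixth-moment Lemma~\ref{lem:sixth-moment} and is the actual input to Proposition~\ref{prop:ave-corr} and Theorem~\ref{th:A2}, so the paper's heavier approach is not wasted. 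One small slip in your closing aside: the single-space second moment $\int U\rho U^\dagger K U\rho U^\dagger\,\dif\mu(U)$ evaluates to $\alpha\,\I_d+\beta\,K$ with $K=\out{\mu}{\nu}\ot\out{j}{i}$ the fixed sandwiched operator, not to $\alpha\,\I_d+\beta\,S$; the swap $S$ appears only if one first twirls $\rho^{\ot 2}$ on the doubled space $\cH_d\ot\cH_d$ and contracts afterwards---either version of the bookkeeping confirms the cancellation you predicted, e.g.\ the direct one gives $\frac{d\Tr{\rho^2_{AB}}-1}{d(d^2-1)}\I_d+\frac{d-\Tr{\rho^2_{AB}}}{d^2-1}\I_d=\frac{1}{d}\I_d$ after summation.
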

Besides, we also get that
\begin{eqnarray*}
&&\int U^\dagger(\rho'_A\ot \rho'_B)U \dif\mu(U) \\
&&= \sum_{i,j=1}^{d_B}\sum_{\mu,\nu=1}^{d_A} \int
U^\dagger(\I_A\ot\out{i}{j})
U\rho_{AB}U^\dagger(\out{\mu}{\nu}\ot\out{j}{i})U\rho_{AB}U^\dagger
(\out{\nu}{\mu}\ot\I_B)U\dif\mu(U),
\end{eqnarray*}
leading to the following identity
\begin{eqnarray}\label{eq:new-id}
\int U^\dagger(\rho'_A\ot \rho'_B)U \dif\mu(U) = c_0\cdot\I_d +
c_1\cdot\rho_{AB} + c_2\cdot\rho^2_{AB},
\end{eqnarray}
where
\begin{eqnarray}
c_0 &=&
\frac{(d^2_A-1)(d^2_B-1)}{(d^2-1)(d^2-4)}\Pa{d-2\Tr{\rho^2_{AB}}},\label{eq:c0}\\
c_1 &=& \frac{d^2(d^2_A+d^2_B-6)+4}{(d^2-1)(d^2-4)},\label{eq:c1}\\
c_2 &=& \frac{2d(d^2_A-1)(d^2_B-1)}{(d^2-1)(d^2-4)}.\label{eq:c2}
\end{eqnarray}

The total correlation attained between the subsystems of a bipartite
quantum system is constrained if the bipartite system undergoes
global unitary evolutions. The authors of
Ref~\cite{Jevtic2012a,Jevtic2012b} investigated related problems
from the field of quantum thermodynamics. Since knowing the maximal
possible variation in correlations is useful, it raises the
optimization problem, where a search of the maximal and minimal
correlated states on a unitary orbit is needed. This is completely
solved for two-qubit systems.

In the following, we make an attempt to calculate the average
quantum mutual information (QMI) along a unitary orbit of a generic
mixed bipartite state. We provide some analytical upper and/or lower
bounds for the average QMI and quantum fidelity along a unitary
orbit, although we cannot obtain precise formulas for them.

\begin{prop}\label{prop:ave-corr}
A lower bound for the maximal correlation (defined by relative
entropy) within the global unitary orbit of a mixed bipartite state
$\rho_{AB}$ is given as follows
\begin{eqnarray}\label{eq:averageQMI-lower}
\int I(A:B)_{\rho'}\dif\mu(U)\geqslant\rS\Pa{\rho_{AB}
||c_0\cdot\I_d + c_1\cdot\rho_{AB} + c_2\cdot\rho^2_{AB}},
\end{eqnarray}
where $I(A:B)_\rho:=\rS(\rho_{AB}||\rho_A\ot\rho_B)$, where
$\rS(\rho||\sigma):=\Tr{\rho(\ln\rho-\ln\sigma)}$ is the relative
entropy. Similarly, the lower and upper bounds for the average
fidelity within the global unitary orbit of a mixed bipartite state
$\rho_{AB}$ are given as follows
\begin{eqnarray}\label{eq:FidelityLowerUpperBounds}
c_0+c_1\Tr{\rho^2_{AB}}+c_2\Tr{\rho^3_{AB}}\leqslant\int\rF(\rho'_{AB},\rho'_A\ot\rho'_B)\dif\mu(U)\leqslant
\rF(\rho_{AB},c_0\cdot\I_d + c_1\cdot\rho_{AB} +
c_2\cdot\rho^2_{AB}).
\end{eqnarray}
where $\rF(\rho,\sigma):=\Tr{\sqrt{\sqrt{\rho}\sigma\sqrt{\rho}}}$
is the fidelity between two states $\rho$ and $\sigma$.
\end{prop}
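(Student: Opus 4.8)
\emph{Proof strategy.} The plan is to deduce all three inequalities from the single averaging identity \eqref{eq:new-id} by combining it with Jensen's inequality for the Haar measure and the standard convexity/concavity of the two functionals involved. The common first move is to exploit that both the relative entropy and the fidelity are invariant under simultaneous unitary conjugation of their two arguments. Since $\rho'_{AB}=U\rho_{AB}U^\dagger$, this lets me transfer the unitary out of the first slot,
\begin{eqnarray*}
\rS(\rho'_{AB}||\rho'_A\ot\rho'_B) = \rS(\rho_{AB}||U^\dagger(\rho'_A\ot\rho'_B)U),\quad \rF(\rho'_{AB},\rho'_A\ot\rho'_B)=\rF(\rho_{AB},U^\dagger(\rho'_A\ot\rho'_B)U),
\end{eqnarray*}
so that in each case the first argument is the fixed state $\rho_{AB}$ and the second is the random operator $U^\dagger(\rho'_A\ot\rho'_B)U$, whose Haar average is exactly $c_0\I_d+c_1\rho_{AB}+c_2\rho^2_{AB}$ by \eqref{eq:new-id}.

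For \eqref{eq:averageQMI-lower} I would use that $\sigma\mapsto\rS(\rho_{AB}||\sigma)$ is convex, which is a consequence of the joint convexity of relative entropy. Writing $I(A:B)_{\rho'}=\rS(\rho_{AB}||U^\dagger(\rho'_A\ot\rho'_B)U)$ and applying Jensen's inequality to the Haar integral gives
\begin{eqnarray*}
\int I(A:B)_{\rho'}\dif\mu(U)\geqslant\rS\Pa{\rho_{AB} || \int U^\dagger(\rho'_A\ot\rho'_B)U\dif\mu(U)},
\end{eqnarray*}
and substituting \eqref{eq:new-id} on the right produces precisely the claimed bound.

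The upper bound in \eqref{eq:FidelityLowerUpperBounds} is dual: the fidelity $\sigma\mapsto\rF(\rho_{AB},\sigma)$ is concave (joint concavity of the fidelity), so Jensen now runs in the opposite direction and yields $\int\rF(\rho'_{AB},\rho'_A\ot\rho'_B)\dif\mu(U)\leqslant\rF\Pa{\rho_{AB}, \int U^\dagger(\rho'_A\ot\rho'_B)U\dif\mu(U)}$, closed off again by \eqref{eq:new-id}. The lower bound is not of Jensen type. The key is to notice that its right-hand side is itself a Haar average: using \eqref{eq:new-id}, cyclicity of the trace, and $\Tr{\rho_{AB}}=1$,
\begin{eqnarray*}
c_0+c_1\Tr{\rho^2_{AB}}+c_2\Tr{\rho^3_{AB}} = \Tr{\rho_{AB}\Pa{c_0\I_d+c_1\rho_{AB}+c_2\rho^2_{AB}}} = \int\Tr{\rho'_{AB}(\rho'_A\ot\rho'_B)}\dif\mu(U).
\end{eqnarray*}
It then suffices to prove the pointwise inequality $\Tr{\rho\sigma}\leqslant\rF(\rho,\sigma)$ for any pair of states and integrate it over the orbit.

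I expect this last pointwise inequality to be the main obstacle, and I would handle it by passing to Schatten norms. Setting $M:=\sqrt{\rho}\sqrt{\sigma}$, one has $\rF(\rho,\sigma)=\tnorm{M}$ while $\Tr{\rho\sigma}=\Tr{M^\dagger M}=\fnorm{M}^2$. Because the singular values of $M$ are nonnegative, $\fnorm{M}^2\leqslant\tnorm{M}^2$; and since $\rho,\sigma$ are density operators, $\tnorm{M}=\rF(\rho,\sigma)\leqslant1$, whence $\tnorm{M}^2\leqslant\tnorm{M}$. Chaining these gives $\Tr{\rho\sigma}=\fnorm{M}^2\leqslant\tnorm{M}^2\leqslant\tnorm{M}=\rF(\rho,\sigma)$, which completes the lower bound. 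Beyond this step, the only care needed is in the directions of the two Jensen applications (convex for relative entropy, concave for fidelity) and in checking that the second-slot conjugation genuinely preserves both functionals; once these are settled the three bounds follow at once.
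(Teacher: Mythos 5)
Your proposal is correct and takes essentially the same route as the paper's Appendix C: rewrite $I(A:B)_{\rho'}$ and the fidelity with the fixed state $\rho_{AB}$ in the first slot via unitary invariance, apply Jensen's inequality using convexity of $\sigma\mapsto\rS(\rho_{AB}||\sigma)$ and concavity of $\sigma\mapsto\rF(\rho_{AB},\sigma)$, invoke the averaged identity \eqref{eq:new-id}, and obtain the fidelity lower bound by integrating the pointwise inequality $\Tr{\rho\sigma}\leqslant\rF(\rho,\sigma)$. The only cosmetic difference is in verifying that pointwise inequality: you use Schatten norms for $M=\sqrt{\rho}\sqrt{\sigma}$, namely $\Tr{\rho\sigma}=\fnorm{M}^2\leqslant\tnorm{M}^2\leqslant\tnorm{M}=\rF(\rho,\sigma)$ via $\rF\leqslant 1$, whereas the paper uses the chain $\rF(\rho,\sigma)\geqslant\Tr{\sqrt{\rho}\sqrt{\sigma}}\geqslant\Tr{\rho\sigma}$; both are valid.
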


\begin{proof}
See \textbf{Appendix C}.
\end{proof}

\begin{remark}
Recently, in \cite{Giorda2017} the authors introduced a general
measure of correlations for two-qubit states based on the classical
mutual information between local observables. They focus on
(classical) correlations between sets of local observables instead
of the quantum vs classical distinction. In this perspective,
quantum states can be characterized as a whole by the average amount
of (classical) correlations between all pairs of local observables.
Under some restrictions, the authors calculated the average mutual
information, whose value depends on the state purity, and the
symmetry of the correlations distribution.
\end{remark}

For the two-qubit case, we apply the spectral decomposition
$\rho_{AB}=U\Lambda U^\dagger$ with  $\Lambda =
\diag(\lambda_1,\lambda_2,\lambda_3,\lambda_4)$, where
\begin{eqnarray*}
(\lambda_1,\lambda_2,\lambda_3,\lambda_4)^\t\in\Delta_3 =\Set{(p_1,p_2,p_3,p_4)^\t\in\real^4: p_j\geqslant0 (\forall
j=1,2,3,4),\sum^4_{j=1}p_j=1}.
\end{eqnarray*}
Due to the invariance under unitary conjugation, we have
\begin{eqnarray*}
\rF( \Lambda, c_0 + c_1 \Lambda + c_2 \Lambda^2) = \Tr{ \sqrt{
\sqrt{\Lambda} (c_0 + c_1 \Lambda + c_2 \Lambda^2) \sqrt{\Lambda} }
} = \sum_{j=1}^4 (c_0 \lambda_j + c_1 \lambda_j^2 + c_2
\lambda_j^3)^\frac{1}{2}.
\end{eqnarray*}

The lower bound in \eqref{eq:FidelityLowerUpperBounds} becomes
\begin{eqnarray*}
c_0+c_1\Tr{\Lambda^2}+c_2\Tr{\Lambda^3} = c_0 + \sum_{j=1}^4 ( c_1
\lambda_j^2 + c_2 \lambda_j^3 ).
\end{eqnarray*}

We numerically check the bounds in
\eqref{eq:FidelityLowerUpperBounds} by choosing 10000 random points
in the probability simplex $\Delta_3$ and drawing the 3D scatter
plot of  $( \cP(\Lambda), c_0+c_1 \Tr{\Lambda^2}+c_2\Tr{\Lambda^3},
\rF(\Lambda, c_0+c_1\Lambda+c_2\Lambda^2) )$. For clarity, its
three-view drawings are also given by using the corresponding 2D
scatter plots: $\cP(\Lambda)$ vs. $\rF(\Lambda,
c_0+c_1\Lambda+c_2\Lambda^2)$ in Figure \ref{Fig_FidelityBound}(b);
$c_0+c_1\Tr{\Lambda^2}+c_2\Tr{\Lambda^3}$ vs. $\rF(\Lambda,
c_0+c_1\Lambda+c_2\Lambda^2)$ in Figure \ref{Fig_FidelityBound}(c);
and $\cP(\Lambda)$ vs.  $c_0+c_1\Tr{\Lambda^2}+c_2\Tr{\Lambda^3}$ in
Figure \ref{Fig_FidelityBound}(d).
We also notice that the maximum of $c_0+c_1
\Tr{\Lambda^2}+c_2\Tr{\Lambda^3}$ is even less than the minimum of
$\rF(\Lambda, c_0+c_1\Lambda+c_2\Lambda^2)$.

\begin{figure}[htbp] \centering
\includegraphics[width=\textwidth]{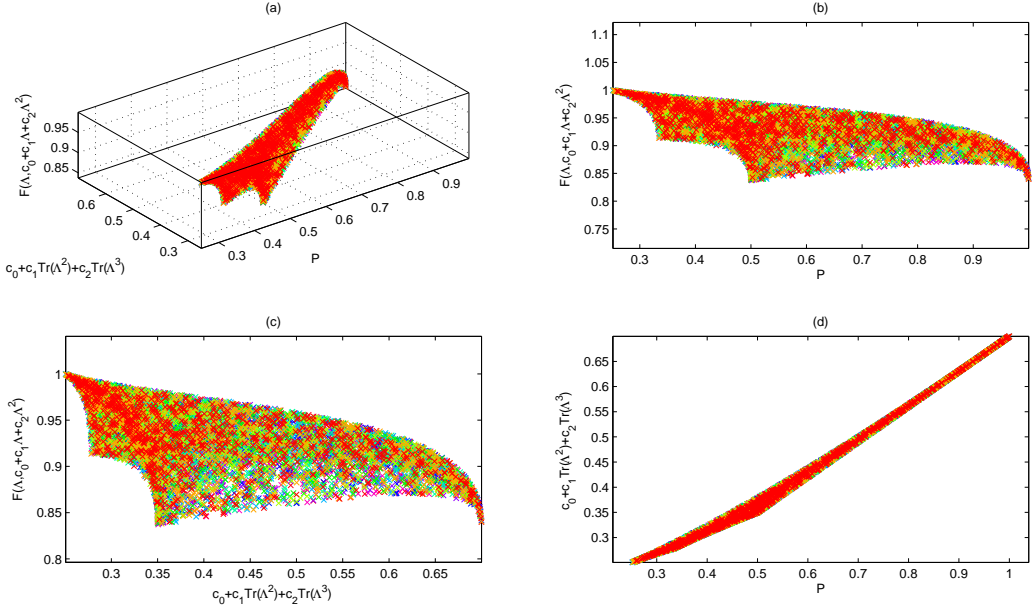}
\caption{ Two-qubit case. (a) 3D scatter plot of ( $\cP(\Lambda), c_0+c_1
\Tr{\Lambda^2}+c_2\Tr{\Lambda^3}, \rF(\Lambda,
c_0+c_1\Lambda+c_2\Lambda^2)$ ). 2D scatter plots: (b)
$\cP(\Lambda)$ vs. $\rF(\Lambda, c_0+c_1\Lambda+c_2\Lambda^2)$; (c)
$c_0+c_1\Tr{\Lambda^2}+c_2\Tr{\Lambda^3}$ vs. $\rF(\Lambda,
c_0+c_1\Lambda+c_2\Lambda^2)$; (d) $\cP(\Lambda)$ vs.
$c_0+c_1\Tr{\Lambda^2}+c_2\Tr{\Lambda^3}$.  }
\label{Fig_FidelityBound}
\end{figure}

In fact, not only the value of the maximal QMI is obtained, but also
the maximum QMI state $\rho_{\max}$ is derived in
\cite{Jevtic2012a,Jevtic2012b}. Specifically, $\rho_{AB}=U\Lambda
U^\dagger$ for $d_A=d_B$, where $\Lambda=\set{\lambda_j:
j=1,\ldots,d=d_Ad_B}$ is the spectrum of $\rho_{AB}$,
\begin{eqnarray}
\max_{\rho'_{AB}\in\cU_{\rho_{AB}}} I(A:B)_{\rho'} =
I(A:B)_{\rho_{\max}} = \ln (d)-\rS(\Lambda),
\end{eqnarray}
where
\begin{eqnarray}
\rho_{\max} = \sum^{d}_{j=1}\lambda_j\out{\Omega_j}{\Omega_j}
\end{eqnarray}
for any generalized Bell-state basis $\Set{\ket{\Omega_j}}$. Unlike
the maximum QMI case, finding $\rho_{\min}$ on the unitary orbit is
more difficult than finding $\rho_{\max}$ in general. Since QMI
varies on a unitary orbit completely depends on the sum of two
marginal entries, we need to figure out the spectra of two reduced
states of all the states in the unitary orbit $\cU_\rho$. The
literature indicates that calculating this set of compatible reduced
states with a given spectrum of a global bipartite seems
unforeseeable. This is well-known "quantum marginal problem", which
is fully solved for two-qubit case \cite{Bravyi2004} and
theoretically for two-qubit case \cite{Klyachko2006}. Specifically,
for two-qubit case, the solution of quantum marginal problem is
given by the following: Mixed two-qubit state $\rho_{AB}$ with
spectrum
$\Lambda=\set{\lambda_1\geqslant\lambda_2\geqslant\lambda_3\geqslant\lambda_4\geqslant0}$
and the marginal states $\rho_{A}$ and $\rho_B$ exist if and only if
minimal eigenvalues $\lambda^A_{\min}$ and $\lambda^B_{\min}$ of the
marginal states satisfy the inequalities \cite{Bravyi2004}:
\begin{eqnarray}
\begin{cases} \min\Pa{\lambda^A_{\min},\lambda^B_{\min}}\geqslant \lambda_3+\lambda_4,\\
\lambda^A_{\min}+\lambda^B_{\min}\geqslant
\lambda_2+\lambda_3+2\lambda_4,\\
\abs{\lambda^A_{\min}-\lambda^B_{\min}}\leqslant
\min\Pa{\lambda_1-\lambda_3,\lambda_2-\lambda_4}.
\end{cases}
\end{eqnarray}
With the help of this result, the value of the minimal QMI on a
unitary orbit is derived for two-qubit case
\cite{Jevtic2012a,Jevtic2012b}:
\begin{eqnarray}
I(A:B)_{\rho_{\min}} =
h(\lambda_1+\lambda_2)+h(\lambda_1+\lambda_3)-\rS(\Lambda),
\end{eqnarray}
where
$\lambda_1\geqslant\lambda_2\geqslant\lambda_3\geqslant\lambda_4\geqslant0$
and $h(p):=p\ln p+(1-p)\ln(1-p)$ is the binary entropy function
defined for $p\in[0,1]$, moreover two-qubit state $\rho_{\min}$ is
given by
\begin{eqnarray}
\rho_{\min} = \sum^0_{i,j=1}\lambda_{ij}\out{ij}{ij}.
\end{eqnarray}
Here $\lambda_{ij}$ is a re-indexing of $\lambda_k
(\lambda_{00}=\lambda_1,\lambda_{01}=\lambda_2,\ldots)$ and
$\set{\ket{i}},\set{\ket{j}}$ are qubit basis states for subsystems
$A$ and $B$.

By Eq.~\eqref{eq:averageQMI-lower} and the concavity of von Neumann
entropy, we can draw the following conclusion:
\begin{thrm}\label{th:A2}
For any generic mixed bipartite state $\rho_{AB}$, denoting
$$
\langle S_A+S_B\rangle:=\int\rS(\rho'_A)\dif\mu(U) +
\int\rS(\rho'_B)\dif\mu(U),
$$
we have that
\begin{eqnarray}
\rS(\rho_{AB})+\rS\Pa{\rho_{AB} ||c_0\cdot\I_d + c_1\cdot\rho_{AB} +
c_2\cdot\rho^2_{AB}}\leqslant \langle S_A+S_B\rangle \leqslant
\rS\Pa{c_0\cdot\I_d + c_1\cdot\rho_{AB} + c_2\cdot\rho^2_{AB}},
\end{eqnarray}
where $c_j(j=0,1,2)$ are given by \eqref{eq:c0}, \eqref{eq:c1}, and
\eqref{eq:c2}, respectively. Furthermore,
\begin{eqnarray*}
\rS\Pa{\rho_{AB} ||c_0\cdot\I_d + c_1\cdot\rho_{AB} +
c_2\cdot\rho^2_{AB}}\leqslant \int I(A:B)_{\rho'}\dif\mu(U)\leqslant
\rS\Pa{c_0\cdot\I_d + c_1\cdot\rho_{AB} + c_2\cdot\rho^2_{AB}} -
\rS(\rho_{AB}).
\end{eqnarray*}
In particular, we get a universal entropy inequality:
\begin{eqnarray}
\rS\Pa{\rho_{AB} ||c_0\cdot\I_d + c_1\cdot\rho_{AB} +
c_2\cdot\rho^2_{AB}}\leqslant \rS\Pa{c_0\cdot\I_d +
c_1\cdot\rho_{AB} + c_2\cdot\rho^2_{AB}} - \rS(\rho_{AB}).
\end{eqnarray}
Moreover, $\rho_{AB} = c_0\cdot\I_d + c_1\cdot\rho_{AB} +
c_2\cdot\rho^2_{AB}$ if and only if $\rho_{AB}$ is maximally mixed
state, i.e., $\rho_{AB}=\frac{\I_A}{d_A}\ot\frac{\I_B}{d_B}$.
\end{thrm}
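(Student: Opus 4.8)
The plan is to reduce every bound in the statement to two ingredients: the additivity and concavity of von Neumann entropy, and the already-established average identity \eqref{eq:new-id}. Throughout write $M:=c_0\cdot\I_d + c_1\cdot\rho_{AB} + c_2\cdot\rho^2_{AB}$, so that \eqref{eq:new-id} reads $\int U^\dagger(\rho'_A\ot\rho'_B)U\dif\mu(U)=M$. The first step is a bookkeeping identity linking the two quantities in the theorem. Since $I(A:B)_{\rho'}=\rS(\rho'_A)+\rS(\rho'_B)-\rS(\rho'_{AB})$ and $\rho'_{AB}=U\rho_{AB}U^\dagger$ is unitarily conjugate to $\rho_{AB}$, we have $\rS(\rho'_{AB})=\rS(\rho_{AB})$ pointwise in $U$; integrating gives
\begin{eqnarray*}
\int I(A:B)_{\rho'}\dif\mu(U)=\langle S_A+S_B\rangle-\rS(\rho_{AB}).
\end{eqnarray*}
Hence the bounds on $\langle S_A+S_B\rangle$ and those on $\int I(A:B)_{\rho'}\dif\mu(U)$ are equivalent, and it suffices to bracket $\langle S_A+S_B\rangle$ on both sides.

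For the upper bound I would use additivity and unitary invariance to rewrite, for each fixed $U$, the pointwise equality $\rS(\rho'_A)+\rS(\rho'_B)=\rS(\rho'_A\ot\rho'_B)=\rS\Pa{U^\dagger(\rho'_A\ot\rho'_B)U}$. Integrating and applying Jensen's inequality for the concave functional $\rS$, followed by \eqref{eq:new-id}, yields
\begin{eqnarray*}
\langle S_A+S_B\rangle=\int\rS\Pa{U^\dagger(\rho'_A\ot\rho'_B)U}\dif\mu(U)\leqslant\rS\Pa{\int U^\dagger(\rho'_A\ot\rho'_B)U\dif\mu(U)}=\rS(M),
\end{eqnarray*}
the claimed upper bound. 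The lower bound is immediate from Proposition~\ref{prop:ave-corr}: combining $\int I(A:B)_{\rho'}\dif\mu(U)\geqslant\rS(\rho_{AB}||M)$ with the bookkeeping identity gives $\langle S_A+S_B\rangle\geqslant\rS(\rho_{AB})+\rS(\rho_{AB}||M)$. Feeding both bounds back through the bookkeeping identity produces the two-sided estimate on the average QMI, and the universal entropy inequality then follows by squeezing: since $\rS(\rho_{AB}||M)\leqslant\int I(A:B)_{\rho'}\dif\mu(U)\leqslant\rS(M)-\rS(\rho_{AB})$, the extreme terms give $\rS(\rho_{AB}||M)\leqslant\rS(M)-\rS(\rho_{AB})$ with no further work. (I note in passing that this inequality is genuinely nontrivial: after cancellation it asserts $\Tr{(M-\rho_{AB})\ln M}\leqslant0$, which is not a generic Klein-type bound and really uses the orbit-averaging structure encoded in Proposition~\ref{prop:ave-corr}.)

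It remains to settle the equality case $\rho_{AB}=M$. The direction $(\Leftarrow)$ is a one-line substitution: if $\rho_{AB}=\I_d/d$ then $\rho'_{AB}=\I_d/d$ for every $U$, so $\rho'_A\ot\rho'_B=\I_d/d$ and $M=\int U^\dagger(\I_d/d)U\dif\mu(U)=\I_d/d=\rho_{AB}$; equivalently one checks $c_0+c_1/d+c_2/d^2=1/d$ directly from \eqref{eq:c0}--\eqref{eq:c2}. The forward direction $(\Rightarrow)$ is where the real work lies and is the step I expect to be the main obstacle. Two routes are available. The algebraic route observes that $\rho_{AB}=M$ is the matrix identity $c_2\rho^2_{AB}+(c_1-1)\rho_{AB}+c_0\I_d=0$, so every eigenvalue of $\rho_{AB}$ is a root of one fixed quadratic and $\rho_{AB}$ has at most two distinct eigenvalues; one must then use $\Tr{\rho_{AB}}=1$ together with the fact that $c_0$ itself depends on $\Tr{\rho^2_{AB}}$ to exclude the genuinely two-valued spectrum, leaving only $\lambda_j\equiv1/d$. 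The conceptually cleaner route notes that if $\rho_{AB}=M$ the whole chain collapses, forcing equality in the Jensen step; by strict concavity of $\rS$ this forces $U^\dagger(\rho'_A\ot\rho'_B)U=M=\rho_{AB}$ for $\mu$-almost every $U$, hence, by continuity of $U\mapsto U^\dagger(\rho'_A\ot\rho'_B)U$, for all $U$, i.e. $\rho'_{AB}=\rho'_A\ot\rho'_B$ at every point of the orbit $\cU_{\rho_{AB}}$. Evaluating this at the maximally correlated conjugate $\rho'=\sum_j\lambda_j\out{\Omega_j}{\Omega_j}$ (for $d_A=d_B$, in a generalized Bell basis), whose marginals are both maximally mixed, forces its mutual information $\ln(d)-\rS(\Lambda)$ to vanish, whence $\rS(\Lambda)=\ln(d)$ and the spectrum is uniform; either way $\rho_{AB}=\I_A/d_A\ot\I_B/d_B$. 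The two delicate points to guard against are the almost-everywhere-to-everywhere upgrade in the concavity argument and, in the algebraic route, verifying that the purity-dependent coefficient $c_0$ really does obstruct a two-valued spectrum.
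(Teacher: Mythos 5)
Your proposal is correct and follows essentially the same route as the paper's Appendix~D: the lower bound comes from Proposition~\ref{prop:ave-corr} together with the pointwise identity $I(A:B)_{\rho'}=\rS(\rho'_A)+\rS(\rho'_B)-\rS(\rho_{AB})$, the upper bound from concavity of $\rS$ applied to the averaged identity \eqref{eq:new-id}, the universal inequality by combining the two, and the equality case via your algebraic route: every eigenvalue of $\rho_{AB}$ must satisfy $c_2\lambda^2+(c_1-1)\lambda+c_0=0$. The paper settles the very point you flag by noting that the quadratic's second root exceeds $1$ and is therefore inadmissible for a density matrix, leaving $\lambda=1/d$; your alternative strict-concavity (Jensen-equality) route is a genuine extra but, as you yourself observe, its Bell-basis step only works for $d_A=d_B$, so the algebraic route remains the one that covers the general case.
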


\begin{proof}
See \textbf{Appendix D}.
\end{proof}
This result tells us whenever the state $\rho_{AB}$ is not maximally
mixed state, we can always find correlated states on a global
unitary orbit of this state. Indeed,  all quantum states are
unitarily connected to classical states, one-way or fully classical
\cite{Modi}.

In particular, for the balanced bipartite system, i.e., $d_A=d_B$,
we have, for $d=d_Ad_B$,
\begin{eqnarray*}
\int\rS(\rho'_A)\dif\mu(U) + \int\rS(\rho'_B)\dif\mu(U) &\leqslant&
\max_{U}\Pa{\rS(\rho'_A)+\rS(\rho'_B)} =
\max_{U}I(A:B)_{\rho'}+\rS(\rho_{AB}) = \ln(d),
\end{eqnarray*}
where we used the fact that
$\max_{U}I(A:B)_{\rho'}=\ln(d)-\rS(\rho_{AB})$ \cite{Jevtic2012a}.
Apparently, this upper bound is true for $d_A\neq d_B$, and moreover
it is trivially since $\rS(\rho'_X)\leqslant \ln (d_X)$, where
$X=A,B$. Thus the following inequality is always true
\begin{eqnarray}
\int\rS(\rho'_A)\dif\mu(U) + \int\rS(\rho'_B)\dif\mu(U)
\leqslant\ln(d).
\end{eqnarray}
In view of this, we get a tighter upper bound for the sum of the
average entropy of two subsystems:
\begin{eqnarray}
\int\rS(\rho'_A)\dif\mu(U) + \int\rS(\rho'_B)\dif\mu(U) \leqslant
\rS\Pa{c_0\cdot\I_d + c_1\cdot\rho_{AB} +
c_2\cdot\rho^2_{AB}}\leqslant\ln(d).
\end{eqnarray}

We will check the difference between the upper bound
$\rS\Pa{c_0\cdot\I_d + c_1\cdot\rho_{AB} + c_2\cdot\rho^2_{AB}}$ and
the lower bound $\rS(\rho_{AB})+ \rS\Pa{\rho_{AB} ||c_0\cdot\I_d +
c_1\cdot\rho_{AB} + c_2\cdot\rho^2_{AB}} $ since we want to identify
the range of the sum of two average entropy. Let
\begin{eqnarray}
\cF(\rho_{AB}):= \rS\Pa{c_0\cdot\I_d + c_1\cdot\rho_{AB} +
c_2\cdot\rho^2_{AB}} - \rS(\rho_{AB}) -\rS\Pa{\rho_{AB}
||c_0\cdot\I_d + c_1\cdot\rho_{AB} + c_2\cdot\rho^2_{AB}}
\end{eqnarray}
and $\cP=\Tr{\rho^2_{AB}}$, which lies in $\Br{d^{-1},1}$ for
$d=d_Ad_B$. Apparently, $\cF\geqslant0$ over the whole set of
states.

If $d_A=d_B=2$, then
\begin{eqnarray}
c_0 =\frac{2-\cP}{10},\quad c_1 = \frac15,\quad c_2 = \frac25.
\end{eqnarray}
Thus in the two-qubit case, we have
\begin{eqnarray}
\int U^\dagger(\rho'_A\ot \rho'_B)U \dif\mu(U) =
\frac{2-\cP}{10}\cdot\I_4 + \frac15\cdot\rho_{AB} +
\frac25\cdot\rho^2_{AB}.
\end{eqnarray}
Note that $\rS(\rho_{AB})+ \rS\Pa{\rho_{AB} ||c_0\cdot\I_d +
c_1\cdot\rho_{AB} + c_2\cdot\rho^2_{AB}} = - \Tr{\rho_{AB}
\ln\Br{c_0\cdot\I_d + c_1\cdot\rho_{AB} + c_2\cdot\rho^2_{AB} }}$.
Now $\cF$ is reduced to the following form:
\begin{eqnarray}
\cF(\rho_{AB})&=&\rS\Pa{\frac{2-\cP}{10}\cdot\I_4 +
\frac15\cdot\rho_{AB} + \frac25\cdot\rho^2_{AB}}  \notag\\
&&+ \Tr{\rho_{AB} \ln\Br{\frac{2-\cP}{10}\cdot\I_d +
\frac15\cdot\rho_{AB} + \frac25\cdot\rho^2_{AB}}}.
\end{eqnarray}
Since $\cF$ is invariant under unitary conjugation, it follows, via
the spectral decomposition $\rho=U\Lambda U^\dagger$ of $\rho_{AB}$,
where $\Lambda = \diag(\lambda_1,\lambda_2,\lambda_3,\lambda_4)$,
that
\begin{eqnarray}
\cF(\Lambda)&=& \rS\Pa{\frac{2-\cP}{10}\cdot\I_4 +
\frac15\cdot\Lambda + \frac25\cdot\Lambda^2}  \notag\\
&&+\Tr{\Lambda\ln\Br{\frac{2-\cP}{10}\cdot\I_4 + \frac15\cdot\Lambda
+ \frac25\cdot\Lambda^2}}.
\end{eqnarray}
We see that $\cF$ is a symmetric function defined over the
probability simplex
$$
\Delta_3=\Set{(p_1,p_2,p_3,p_4)^\t\in\real^4: p_j\geqslant0 (\forall
j=1,2,3,4),\sum^4_{j=1}p_j=1}.
$$
Next we make numerical analysis of this function defined over the
probability simplex.
\begin{eqnarray}
\cF(\lambda_1,\lambda_2,\lambda_3,\lambda_4)&=& \sum^4_{j=1}
\lambda_j\ln\Pa{\frac{2-\cP}{10}  + \frac15\lambda_j+\frac25\lambda^2_j} \\
&&- \Pa{\frac{2-\cP}{10} +
\frac15\lambda_j+\frac25\lambda^2_j}\ln\Pa{\frac{2-\cP}{10} +
\frac15\lambda_j+\frac25\lambda^2_j},
\end{eqnarray}
where $(\lambda_1,\lambda_2,\lambda_3,\lambda_4)^\t\in\Delta_3$ and
$\cP\equiv\cP(\Lambda):=\Tr{\Lambda^2}=\sum^4_{j=1}\lambda^2_j$.

Choose random point $\Lambda$ in the probability simplex $\Delta_3$,
and then draw the 3-dimensional figure about the 3-tuples $(x,y,z)$
where $x=\cP(\Lambda)$, $y=
-\Tr{\Lambda\ln\Br{\frac{2-\cP(\Lambda)}{10}\cdot\I_4 +
\frac15\cdot\Lambda + \frac25\cdot\Lambda^2}}$,
$z=\rS(c_0+c_1\Lambda+c_2\Lambda^2)$.
To illustrate the relationship among $\cP(\Lambda)$,
$\rS(\rho_{AB})+ \rS\Pa{\rho_{AB} ||c_0\cdot\I_d + c_1\cdot\rho_{AB}
+ c_2\cdot\rho^2_{AB}}$ and $\rS(c_0+c_1\Lambda+c_2\Lambda^2)$, we
show the 3D scatter plot in Figure \ref{Fig_EntropyBoundScatter}(a),
where there are 5000 random points used in total. To make things
more clear, we also demonstrate 2D scatter plots: $\rS(\rho_{AB})+
\rS\Pa{\rho_{AB} ||c_0\cdot\I_d + c_1\cdot\rho_{AB} +
c_2\cdot\rho^2_{AB}} $ vs. $\rS(c_0+c_1\Lambda+c_2\Lambda^2)$ in
Figure \ref{Fig_EntropyBoundScatter}(b), $\cP$ vs.
$\rS(c_0+c_1\Lambda+c_2\Lambda^2)$ in Figure
\ref{Fig_EntropyBoundScatter}(c), and $\cP$ vs. $\cF$ in Figure
\ref{Fig_EntropyBoundScatter}(d).
Note that for the case where $\lambda_j=\frac{1}{4} (j=1,2,3,4)$ we have
$\rS(\Lambda)=\rS(c_0+c_1\Lambda+c_2\Lambda^2)=\ln 4 = 1.39$ and   $\cF=0$; while  the case with one $\lambda_j$ equaling to 1 and the other three being zeros yields $\rS(\Lambda)=0$, $\rS(c_0+c_1\Lambda+c_2\Lambda^2)=
-\frac{3}{10} \ln \frac{1}{10} - \frac{7}{10} \ln \frac{7}{10}=0.94$, $\Tr{\Lambda\ln(c_0+c_1\Lambda+c_2\Lambda^2) } = \ln \frac{7}{10}$ and  $\cF=-\frac{3}{10} \ln \frac{1}{10} - \frac{3}{5} \ln \frac{7}{10}=0.58$.
In these plots, $\cP \in [\frac{1}{4}, 1]$,
$\rS(\Lambda) \in [0, \ln 4]$,  $\rS(c_0+c_1\Lambda+c_2\Lambda^2)
\in [0.94, \ln 4]$ and $\cF \in [0,0.58]$.

\begin{figure}[htbp] \centering
\includegraphics[width=\textwidth]{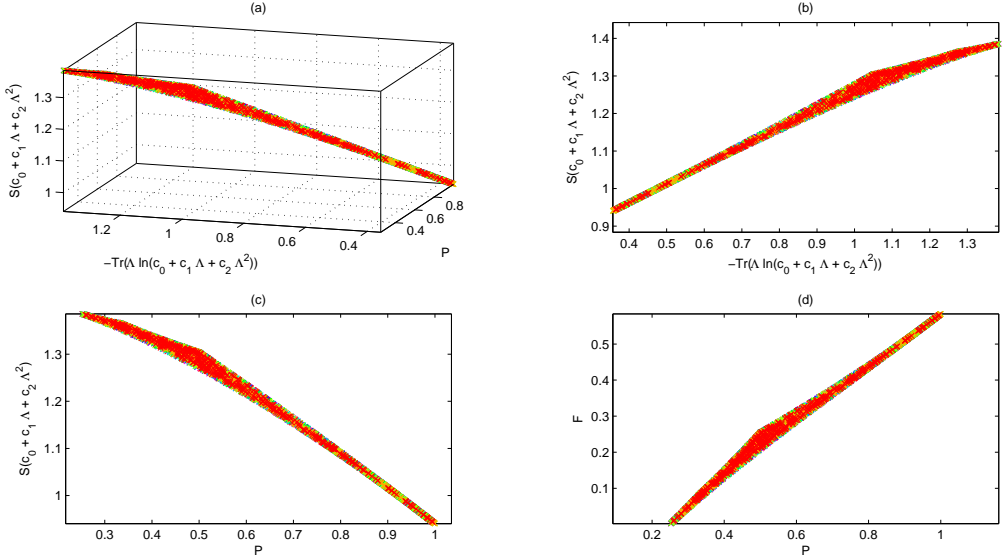}
\caption{ Two-qubit case ($d_A=d_B=2$). (a) 3D scatter plot of ( $\cP(\Lambda), -\Tr{\Lambda\ln(c_0+c_1\Lambda+c_2\Lambda^2) },
\rS(c_0+c_1\Lambda+c_2\Lambda^2)$ ). 2D scatter plots: (b) $-\Tr{\Lambda\ln(c_0+c_1\Lambda+c_2\Lambda^2) }$ vs. $\rS(c_0+c_1\Lambda+c_2\Lambda^2)$; (c)  $\cP(\Lambda)$ vs. $\rS(c_0+c_1\Lambda+c_2\Lambda^2)$; (d) $\cP(\Lambda)$ vs.  $\cF(\Lambda)$.  } \label{Fig_EntropyBoundScatter}
\end{figure}

\begin{remark}
Denote $\rho'=U\rho U^\dagger$ for a fixed state $\rho$. If $U$ is
such that $I(A:B)_{\rho'}>I(A:B)_{\rho}$, then we say that $U$
enhances the correlation between $A$ and $B$, otherwise, we say that
it weakens the correlation $A$ and $B$ when
$I(A:B)_{\rho'}<I(A:B)_{\rho}$. In particular, if
$I(A:B)_{\rho'}=0$, then we say that $U$ decouples $A$ from $B$.
Clearly there exists a state (for example completely mixed state) of
which its correlations  cannot be enhanced by any unitary. We see
from previous discussion, finding minimal mutual information is in
terms of an optimization problem that is extremely hard to handle in
higher dimensions. Proposition~\ref{prop:ave-corr} and
Theorem~\ref{th:A2} can be seen as a different strategy based on
probabilistic averages. Denote
$$
\Delta(\rho_{AB}) := \int I(A:B)_{\rho'}\dif\mu(U) - I(A:B)_{\rho}.
$$
If $\Delta(\rho)>0$, then the correlations existing in $\rho_{AB}$
can be enhanced by many unitaries (via a perspective of
concentration of measure phenomenon from Measure Theory). Clearly a
lot of states of product forms satisfies $\Delta(\rho)>0$. In the
opposite, $\Delta(\rho)<0$ means that the correlations existing in
$\rho_{AB}$ can be decreased by many unitaries. Naturally, some
questions arise: For a given state $\rho_{AB}$, one wants to know
wether if its unitary orbit of $\rho_{AB}$ contains product
state(s).
\end{remark}

\begin{remark}
A similar questions can be considered for the so-called
\emph{quantum conditional mutual information} (QCMI), defined by
$I(A:B|E)_\rho:= \rS(\rho_{AE})+\rS(\rho_{BE}) -
\rS(\rho_{ABE})-\rS(\rho_E)$ for a tripartite $\rho_{ABE}$ on a
tensorial Hilbert space $\cH_A\ot\cH_B\ot\cH_E$. We still denote
$\rho'_{ABE}=(U\ot\I_E)\rho_{ABE} (U^\dagger\ot\I_E)$ for any
unitary $U$ on $\cH_A\ot\cH_B$. Analogously, maximal and/or minimal
QCMI can be considered, that is,
$$
\max_{U}I(A:B|E)_{\rho'}\quad\text{and}\quad\min_{U}I(A:B|E)_{\rho'}.
$$
Clearly $\min_{U}I(A:B|E)_{\rho'}$ is very important since it gives
a lower bound for QCMI:
$$
I(A:B|E)_\rho\geqslant \min_{U}I(A:B|E)_{\rho'}.
$$
We can also consider the average QCMI:
\begin{eqnarray}
\int I(A:B|E)_{\rho'} \dif\mu(U).
\end{eqnarray}
All problems mentioned above are beyond the goal of this paper. We
will come back to them in the future research.
\end{remark}

\begin{remark}
Given a quantum channel $\cE$ (trace-preserving and completely
positive linear map) which is, via Kraus representation, represented
as $\cE=\sum_{j}\Ad_{E_j}$, where $\Ad_{E_j}(X):=E_jXE^\dagger_j$.
We can use Lemma~\ref{lem:sixth-moment} to get the average purity of
a unitary orbit of a given state undergoing a fixed quantum channel
$\cE$. Indeed, by Choi-Jamio{\l}kowski isomorphism \cite{Choi1975},
$\jam{\cE}:=(\cE\ot\I)(\out{\I_d}{\I_d})=\sum_j \out{E_j}{E_j}$,
\begin{eqnarray}
\int \cE(U\rho U^\dagger)^2 \dif\mu(U) &=& \sum_{i,j} E_i\Pa{U\rho
U^\dagger
E^\dagger_iE_jU\rho U^\dagger}E^\dagger_j\\
&=&\frac{d\Tr{\rho^2}-1}{d(d^2-1)}\sum_{i,j}\Tr{E^\dagger_iE_j}E_iE^\dagger_j
+ \frac{d-\Tr{\rho^2}}{d(d^2-1)}\cE(\I_d)^2,
\end{eqnarray}
implying that
\begin{eqnarray}
\int \Tr{\cE(U\rho U^\dagger)^2} \dif\mu(U)
=\frac{d\Tr{\rho^2}-1}{d(d^2-1)}\Tr{\jam{\cE}^2} +
\frac{d-\Tr{\rho^2}}{d(d^2-1)}\Tr{\cE(\I_d)^2},
\end{eqnarray}
where the lhs of the above last formula is just the average purity
of a unitary orbit of a given state undergoing a fixed quantum
channel $\cE$.
\end{remark}

\section{Concluding remarks}\label{sect:conclusion}

In this paper, we investigate the average entropy of a subsystem
along a global unitary orbit of a given mixed bipartite state in the
finite-dimensional space. Although it is still unable to derive the
closed-form compact formula for the mixed state case, compared with
Page's formula, nevertheless we get an analytical lower bound for
this average entropy for the mixed state case. In deriving this
analytical lower bound, we obtain some useful by-products of
independent interest, for instance, for a bipartite quantum system,
maximally mixed state can be represented by a uniform probability
mixing of tensor products of two marginal states at each point
within the global unitary orbit of any given mixed bipartite state.
This is amazing. In addition, from the discussion after finishing
the proof of Proposition~\ref{prop:ave-corr}, we see that the
average entropy of a subsystem is intimately related to the
well-known "quantum marginal problem" or $N$-representability in
quantum chemistry. Besides, it also connects with entanglement
polytope, a notion proposed by M. Walter \emph{et al.} in studying
multipartite entanglement from single-particle information
\cite{Walter2013}.

These obtained results can be applied to estimate average
correlation along a global unitary orbit of a given mixed bipartite
state. The corresponding numerics about these results is also
provided in lower dimensional case (for instance $d_A=d_B=2$).
Except that, the results obtained in the present paper can also be
used to study further the average coherence of a class of random
states induced from isospectral bipartite mixed states. Indeed,
recently we have already calculated exactly the average coherence
for random mixed quantum states \cite{lin2017,LUA2017} induced from
random bipartite pure states.

Finally, we conclude this section with two open problems: (i)
computing the average coherence (via relative entropy of coherence)
of a subsystem of isospectral bipartite systems
\begin{eqnarray}
\int\Pa{\rS((\rho'_A)_\diag) - \rS(\rho'_A)}\dif\mu(U),
\end{eqnarray}
and (ii) proving the following identity for the balanced system,
i.e., $d_A=d_B$,
\begin{eqnarray}\label{eq:conjecture2017}
\int\rS(\rho'_A)\dif\mu(U)=\int\rS(\rho'_B)\dif\mu(U).
\end{eqnarray}
If \eqref{eq:conjecture2017} \emph{were} true, then we would see
from Theorem~\ref{th:A1} and Theorem~\ref{th:A2} that
\begin{eqnarray}
-\ln(1-a_1)\leqslant \int\rS(\rho'_A)\dif\mu(U)\leqslant
\frac12\rS\Pa{c_0\cdot\I_d + c_1\cdot\rho_{AB} +
c_2\cdot\rho^2_{AB}}.
\end{eqnarray}
We hope that the present work and questions proposed can bring out
more interesting and insightful perspective(s) in quantum
information theory.

\subsubsection*{Acknowledgements}
L. Zhang is supported Natural Science Foundation of Zhejiang
Province of China (LY17A010027) and also by National Natural Science
Foundation of China (Nos.11301124 \& 61673145). H. Xiang is
supported by the National Natural Science Foundation of China
(Nos.11571265 \& 11471253). Michael Walter is also acknowledged for
his comments on this manuscript.


\section*{Appendix}\label{sect:appendix}

\subsection*{A. The proof of
Lemma~\ref{lem:sixth-moment}}

In this paper, we will utilize some notion of matrix integral
\cite{Benoit,Collins,LZ2014}. The formula in
Lemma~\ref{lem:sixth-moment} is given firstly. A detailed reasoning
is presented here.

\begin{proof}[The proof of Lemma~\ref{lem:sixth-moment}]
Firstly we note that
\begin{eqnarray*}
&&\Innerm{i_1}{UAU^\dagger BUXU^\dagger CUDU^\dagger}{i'_1} =
\sum_{j_1,j'_1} \Innerm{i_1}{U}{j_1}\Innerm{j_1}{AU^\dagger
BUXU^\dagger CUD}{j'_1}\Innerm{j'_1}{U^\dagger}{i'_1} \\
&&=\sum_{i_2,j_1,j_2,i'_2,j'_1,j'_2}
U_{i_1j_1}\overline{U}_{i'_1j'_1}\Innerm{j_1}{A}{j'_2}\Innerm{j'_2}{U^\dagger}{i'_2}
\Innerm{i'_2}{BUXU^\dagger C}{i_2}\Innerm{i_2}{U}{j_2}\Innerm{j_2}{D}{j'_1}\\
&&=\sum_{i_2,j_1,j_2,i'_2,j'_1,j'_2}
U_{i_1j_1}U_{i_2j_2}\overline{U}_{i'_1j'_1}\overline{U}_{i'_2j'_2}\Innerm{j_1}{A}{j'_2}
\Innerm{i'_2}{BUXU^\dagger C}{i_2}\Innerm{j_2}{D}{j'_1}\\
&&=\sum_{i_2,i_3,j_1,j_2,j_3,i'_2,i'_3,j'_1,j'_2,j'_3}
U_{i_1j_1}U_{i_2j_2}\overline{U}_{i'_1j'_1}\overline{U}_{i'_2j'_2}\\
&&~~~\times\Innerm{j_1}{A}{j'_2}
\Innerm{i'_2}{B}{i_3}\Innerm{i_3}{U}{j_3}\Innerm{j_3}{X}{j'_3}\Innerm{j'_3}{U^\dagger}{i'_3}\Innerm{i'_3} {C}{i_2}\Innerm{j_2}{D}{j'_1}\\
&&=\sum_{i_2,i_3,j_1,j_2,j_3,i'_2,i'_3,j'_1,j'_2,j'_3}
U_{i_1j_1}U_{i_2j_2}U_{i_3j_3}\overline{U}_{i'_1j'_1}\overline{U}_{i'_2j'_2}\overline{U}_{i'_3j'_3}\Innerm{j_1}{A}{j'_2}
\Innerm{i'_2}{B}{i_3}\Innerm{j_3}{X}{j'_3}\Innerm{i'_3}
{C}{i_2}\Innerm{j_2}{D}{j'_1}.
\end{eqnarray*}
Then we have:
\begin{eqnarray*}
&&\Innerm{i_1}{\int UAU^\dagger BUXU^\dagger CUDU^\dagger dU}{i'_1}\\
&& =\sum_{i_2,i_3,j_1,j_2,j_3,i'_2,i'_3,j'_1,j'_2,j'_3}A_{j_1,j'_2}
B_{i'_2,i_3}X_{j_3,j'_3}C_{i'_3,i_2}D_{j_2,j'_1}
\Pa{\int U_{i_1j_1}U_{i_2j_2}U_{i_3j_3}\overline{U}_{i'_1j'_1}\overline{U}_{i'_2j'_2}\overline{U}_{i'_3j'_3}dU}\\
&&= \sum_{i_2,i_3,j_1,j_2,j_3,i'_2,i'_3,j'_1,j'_2,j'_3} A_{j_1,j'_2}
B_{i'_2,i_3}X_{j_3,j'_3}C_{i'_3,i_2}D_{j_2,j'_1} \\
&&~~~\times\Pa{\sum_{\pi,\sigma\in S_3}
\iinner{i_1}{i'_{\pi(1)}}\iinner{i_2}{i'_{\pi(2)}}\iinner{i_3}{i'_{\pi(3)}}
\iinner{j_1}{j'_{\sigma(1)}}\iinner{j_2}{j'_{\sigma(2)}}\iinner{j_3}{j'_{\sigma(3)}}\mathrm{Wg}(\sigma\pi^{-1})}\\
&& = \sum_{\pi,\sigma\in S_3} \mathrm{Wg}(\sigma\pi^{-1})\\
&&~~~\times\Pa{\sum_{i_2,i_3,j_1,j_2,j_3,i'_2,i'_3,j'_1,j'_2,j'_3}
A_{j_1,j'_2}B_{i'_2,i_3}X_{j_3,j'_3}C_{i'_3,i_2}D_{j_2,j'_1}\iinner{i_1}{i'_{\pi(1)}}\iinner{i_2}{i'_{\pi(2)}}\iinner{i_3}{i'_{\pi(3)}}
\iinner{j_1}{j'_{\sigma(1)}}\iinner{j_2}{j'_{\sigma(2)}}\iinner{j_3}{j'_{\sigma(3)}}}.
\end{eqnarray*}
In what follows, we compute this value step-by-step. \\
(1). If $(\pi,\sigma)=((1),(1))$, then
\begin{eqnarray}
&&\sum_{i_2,i_3,j_1,j_2,j_3,i'_2,i'_3,j'_1,j'_2,j'_3}
A_{j_1,j'_2}B_{i'_2,i_3}X_{j_3,j'_3}C_{i'_3,i_2}D_{j_2,j'_1}
\iinner{i_1}{i'_1}\iinner{i_2}{i'_2}\iinner{i_3}{i'_3}
\iinner{j_1}{j'_1}\iinner{j_2}{j'_2}\iinner{j_3}{j'_3}\\
&&= \Tr{AD}\Tr{X}\Tr{BC}\iinner{i_1}{i'_{1}}.
\end{eqnarray}
(2). If $(\pi,\sigma)=((1),(12))$, then
\begin{eqnarray}
&&\sum_{i_2,i_3,j_1,j_2,j_3,i'_2,i'_3,j'_1,j'_2,j'_3}
A_{j_1,j'_2}B_{i'_2,i_3}X_{j_3,j'_3}C_{i'_3,i_2}D_{j_2,j'_1}
\iinner{i_1}{i'_1}\iinner{i_2}{i'_2}\iinner{i_3}{i'_3}
\iinner{j_1}{j'_2}\iinner{j_2}{j'_1}\iinner{j_3}{j'_3}\\
&&= \Tr{A}\Tr{D}\Tr{X}\Tr{BC}\iinner{i_1}{i'_{1}}.
\end{eqnarray}
(3). If $(\pi,\sigma)=((1),(13))$, then
\begin{eqnarray}
&&\sum_{i_2,i_3,j_1,j_2,j_3,i'_2,i'_3,j'_1,j'_2,j'_3}
A_{j_1,j'_2}B_{i'_2,i_3}X_{j_3,j'_3}C_{i'_3,i_2}D_{j_2,j'_1}
\iinner{i_1}{i'_1}\iinner{i_2}{i'_2}\iinner{i_3}{i'_3}
\iinner{j_1}{j'_3}\iinner{j_2}{j'_2}\iinner{j_3}{j'_1}\\
&&= \Tr{ADX}\Tr{BC}\iinner{i_1}{i'_1}.
\end{eqnarray}
(4). If $(\pi,\sigma)=((1),(23))$, then
\begin{eqnarray}
&&\sum_{i_2,i_3,j_1,j_2,j_3,i'_2,i'_3,j'_1,j'_2,j'_3}
A_{j_1,j'_2}B_{i'_2,i_3}X_{j_3,j'_3}C_{i'_3,i_2}D_{j_2,j'_1}
\iinner{i_1}{i'_1}\iinner{i_2}{i'_2}\iinner{i_3}{i'_3}
\iinner{j_1}{j'_1}\iinner{j_2}{j'_3}\iinner{j_3}{j'_2}\\
&&= \Tr{DAX}\Tr{BC}\iinner{i_1}{i'_1}.
\end{eqnarray}
(5). If $(\pi,\sigma)=((1),(123))$, then
\begin{eqnarray}
&&\sum_{i_2,i_3,j_1,j_2,j_3,i'_2,i'_3,j'_1,j'_2,j'_3}
A_{j_1,j'_2}B_{i'_2,i_3}X_{j_3,j'_3}C_{i'_3,i_2}D_{j_2,j'_1}
\iinner{i_1}{i'_1}\iinner{i_2}{i'_2}\iinner{i_3}{i'_3}
\iinner{j_1}{j'_2}\iinner{j_2}{j'_3}\iinner{j_3}{j'_1}\\
&&= \Tr{A}\Tr{DX}\Tr{BC}\iinner{i_1}{i'_1}.
\end{eqnarray}
(6). If $(\pi,\sigma)=((1),(132))$, then
\begin{eqnarray}
&&\sum_{i_2,i_3,j_1,j_2,j_3,i'_2,i'_3,j'_1,j'_2,j'_3}
A_{j_1,j'_2}B_{i'_2,i_3}X_{j_3,j'_3}C_{i'_3,i_2}D_{j_2,j'_1}
\iinner{i_1}{i'_1}\iinner{i_2}{i'_2}\iinner{i_3}{i'_3}
\iinner{j_1}{j'_3}\iinner{j_2}{j'_1}\iinner{j_3}{j'_2}\\
&&= \Tr{D}\Tr{AX}\Tr{BC}\iinner{i_1}{i'_1}.
\end{eqnarray}
(7). If $(\pi,\sigma)=((12),(1))$, then
\begin{eqnarray}
&&\sum_{i_2,i_3,j_1,j_2,j_3,i'_2,i'_3,j'_1,j'_2,j'_3}
A_{j_1,j'_2}B_{i'_2,i_3}X_{j_3,j'_3}C_{i'_3,i_2}D_{j_2,j'_1}
\iinner{i_1}{i'_2}\iinner{i_2}{i'_1}\iinner{i_3}{i'_3}
\iinner{j_1}{j'_1}\iinner{j_2}{j'_2}\iinner{j_3}{j'_3}\\
&&= \Tr{AD}\Tr{X}\Innerm{i_1}{BC}{i'_1}.
\end{eqnarray}
(8). If $(\pi,\sigma)=((12),(12))$, then
\begin{eqnarray}
&&\sum_{i_2,i_3,j_1,j_2,j_3,i'_2,i'_3,j'_1,j'_2,j'_3}
A_{j_1,j'_2}B_{i'_2,i_3}X_{j_3,j'_3}C_{i'_3,i_2}D_{j_2,j'_1}
\iinner{i_1}{i'_2}\iinner{i_2}{i'_1}\iinner{i_3}{i'_3}
\iinner{j_1}{j'_2}\iinner{j_2}{j'_1}\iinner{j_3}{j'_3}\\
&&= \Tr{A}\Tr{D}\Tr{X}\Innerm{i_1}{BC}{i'_1}.
\end{eqnarray}
(9). If $(\pi,\sigma)=((12),(13))$, then
\begin{eqnarray}
&&\sum_{i_2,i_3,j_1,j_2,j_3,i'_2,i'_3,j'_1,j'_2,j'_3}
A_{j_1,j'_2}B_{i'_2,i_3}X_{j_3,j'_3}C_{i'_3,i_2}D_{j_2,j'_1}
\iinner{i_1}{i'_2}\iinner{i_2}{i'_1}\iinner{i_3}{i'_3}
\iinner{j_1}{j'_3}\iinner{j_2}{j'_2}\iinner{j_3}{j'_1}\\
&&= \Tr{ADX}\Innerm{i_1}{BC}{i'_1}.
\end{eqnarray}
(10). If $(\pi,\sigma)=((12),(23))$, then
\begin{eqnarray}
&&\sum_{i_2,i_3,j_1,j_2,j_3,i'_2,i'_3,j'_1,j'_2,j'_3}
A_{j_1,j'_2}B_{i'_2,i_3}X_{j_3,j'_3}C_{i'_3,i_2}D_{j_2,j'_1}
\iinner{i_1}{i'_2}\iinner{i_2}{i'_1}\iinner{i_3}{i'_3}
\iinner{j_1}{j'_1}\iinner{j_2}{j'_3}\iinner{j_3}{j'_2}\\
&&= \Tr{DAX}\Innerm{i_1}{BC}{i'_1}.
\end{eqnarray}
(11). If $(\pi,\sigma)=((12),(123))$, then
\begin{eqnarray}
&&\sum_{i_2,i_3,j_1,j_2,j_3,i'_2,i'_3,j'_1,j'_2,j'_3}
A_{j_1,j'_2}B_{i'_2,i_3}X_{j_3,j'_3}C_{i'_3,i_2}D_{j_2,j'_1}
\iinner{i_1}{i'_2}\iinner{i_2}{i'_1}\iinner{i_3}{i'_3}
\iinner{j_1}{j'_2}\iinner{j_2}{j'_3}\iinner{j_3}{j'_1}\\
&&= \Tr{A}\Tr{DX}\Innerm{i_1}{BC}{i'_1}.
\end{eqnarray}
(12). If $(\pi,\sigma)=((12),(132))$, then
\begin{eqnarray}
&&\sum_{i_2,i_3,j_1,j_2,j_3,i'_2,i'_3,j'_1,j'_2,j'_3}
A_{j_1,j'_2}B_{i'_2,i_3}X_{j_3,j'_3}C_{i'_3,i_2}D_{j_2,j'_1}
\iinner{i_1}{i'_2}\iinner{i_2}{i'_1}\iinner{i_3}{i'_3}
\iinner{j_1}{j'_3}\iinner{j_2}{j'_1}\iinner{j_3}{j'_2}\\
&&= \Tr{D}\Tr{AX}\Innerm{i_1}{BC}{i'_1}.
\end{eqnarray}
(13). If $(\pi,\sigma)=((13),(1))$, then
\begin{eqnarray}
&&\sum_{i_2,i_3,j_1,j_2,j_3,i'_2,i'_3,j'_1,j'_2,j'_3}
A_{j_1,j'_2}B_{i'_2,i_3}X_{j_3,j'_3}C_{i'_3,i_2}D_{j_2,j'_1}
\iinner{i_1}{i'_3}\iinner{i_2}{i'_2}\iinner{i_3}{i'_1}
\iinner{j_1}{j'_1}\iinner{j_2}{j'_2}\iinner{j_3}{j'_3}\\
&&= \Tr{AD}\Tr{X}\Innerm{i_1}{CB}{i'_1}.
\end{eqnarray}
(14). If $(\pi,\sigma)=((13),(12))$, then
\begin{eqnarray}
&&\sum_{i_2,i_3,j_1,j_2,j_3,i'_2,i'_3,j'_1,j'_2,j'_3}
A_{j_1,j'_2}B_{i'_2,i_3}X_{j_3,j'_3}C_{i'_3,i_2}D_{j_2,j'_1}
\iinner{i_1}{i'_3}\iinner{i_2}{i'_2}\iinner{i_3}{i'_1}
\iinner{j_1}{j'_2}\iinner{j_2}{j'_1}\iinner{j_3}{j'_3}\\
&&= \Tr{A}\Tr{D}\Tr{X}\Innerm{i_1}{CB}{i'_1}.
\end{eqnarray}
(15). If $(\pi,\sigma)=((13),(13))$, then
\begin{eqnarray}
&&\sum_{i_2,i_3,j_1,j_2,j_3,i'_2,i'_3,j'_1,j'_2,j'_3}
A_{j_1,j'_2}B_{i'_2,i_3}X_{j_3,j'_3}C_{i'_3,i_2}D_{j_2,j'_1}
\iinner{i_1}{i'_3}\iinner{i_2}{i'_2}\iinner{i_3}{i'_1}
\iinner{j_1}{j'_3}\iinner{j_2}{j'_2}\iinner{j_3}{j'_1}\\
&&= \Tr{ADX}\Innerm{i_1}{CB}{i'_1}.
\end{eqnarray}
(16). If $(\pi,\sigma)=((13),(23))$, then
\begin{eqnarray}
&&\sum_{i_2,i_3,j_1,j_2,j_3,i'_2,i'_3,j'_1,j'_2,j'_3}
A_{j_1,j'_2}B_{i'_2,i_3}X_{j_3,j'_3}C_{i'_3,i_2}D_{j_2,j'_1}
\iinner{i_1}{i'_3}\iinner{i_2}{i'_2}\iinner{i_3}{i'_1}
\iinner{j_1}{j'_1}\iinner{j_2}{j'_3}\iinner{j_3}{j'_2}\\
&&= \Tr{DAX}\Innerm{i_1}{CB}{i'_1}.
\end{eqnarray}
(17). If $(\pi,\sigma)=((13),(123))$, then
\begin{eqnarray}
&&\sum_{i_2,i_3,j_1,j_2,j_3,i'_2,i'_3,j'_1,j'_2,j'_3}
A_{j_1,j'_2}B_{i'_2,i_3}X_{j_3,j'_3}C_{i'_3,i_2}D_{j_2,j'_1}
\iinner{i_1}{i'_3}\iinner{i_2}{i'_2}\iinner{i_3}{i'_1}
\iinner{j_1}{j'_2}\iinner{j_2}{j'_3}\iinner{j_3}{j'_1}\\
&&= \Tr{A}\Tr{DX}\Innerm{i_1}{CB}{i'_1}.
\end{eqnarray}
(18). If $(\pi,\sigma)=((13),(132))$, then
\begin{eqnarray}
&&\sum_{i_2,i_3,j_1,j_2,j_3,i'_2,i'_3,j'_1,j'_2,j'_3}
A_{j_1,j'_2}B_{i'_2,i_3}X_{j_3,j'_3}C_{i'_3,i_2}D_{j_2,j'_1}
\iinner{i_1}{i'_3}\iinner{i_2}{i'_2}\iinner{i_3}{i'_1}
\iinner{j_1}{j'_3}\iinner{j_2}{j'_1}\iinner{j_3}{j'_2}\\
&&= \Tr{D}\Tr{AX}\Innerm{i_1}{CB}{i'_1}.
\end{eqnarray}
(19). If $(\pi,\sigma)=((23),(1))$, then
\begin{eqnarray}
&&\sum_{i_2,i_3,j_1,j_2,j_3,i'_2,i'_3,j'_1,j'_2,j'_3}
A_{j_1,j'_2}B_{i'_2,i_3}X_{j_3,j'_3}C_{i'_3,i_2}D_{j_2,j'_1}
\iinner{i_1}{i'_1}\iinner{i_2}{i'_3}\iinner{i_3}{i'_2}
\iinner{j_1}{j'_1}\iinner{j_2}{j'_2}\iinner{j_3}{j'_3}\\
&&= \Tr{AD}\Tr{X}\Tr{B}\Tr{C}\iinner{i_1}{i'_{1}}.
\end{eqnarray}
(20). If $(\pi,\sigma)=((23),(12))$, then
\begin{eqnarray}
&&\sum_{i_2,i_3,j_1,j_2,j_3,i'_2,i'_3,j'_1,j'_2,j'_3}
A_{j_1,j'_2}B_{i'_2,i_3}X_{j_3,j'_3}C_{i'_3,i_2}D_{j_2,j'_1}
\iinner{i_1}{i'_1}\iinner{i_2}{i'_3}\iinner{i_3}{i'_2}
\iinner{j_1}{j'_2}\iinner{j_2}{j'_1}\iinner{j_3}{j'_3}\\
&&= \Tr{A}\Tr{D}\Tr{X}\Tr{B}\Tr{C}\iinner{i_1}{i'_{1}}.
\end{eqnarray}
(21). If $(\pi,\sigma)=((23),(13))$, then
\begin{eqnarray}
&&\sum_{i_2,i_3,j_1,j_2,j_3,i'_2,i'_3,j'_1,j'_2,j'_3}
A_{j_1,j'_2}B_{i'_2,i_3}X_{j_3,j'_3}C_{i'_3,i_2}D_{j_2,j'_1}
\iinner{i_1}{i'_1}\iinner{i_2}{i'_3}\iinner{i_3}{i'_2}
\iinner{j_1}{j'_3}\iinner{j_2}{j'_2}\iinner{j_3}{j'_1}\\
&&= \Tr{ADX}\Tr{B}\Tr{C}\iinner{i_1}{i'_1}.
\end{eqnarray}
(22). If $(\pi,\sigma)=((23),(23))$, then
\begin{eqnarray}
&&\sum_{i_2,i_3,j_1,j_2,j_3,i'_2,i'_3,j'_1,j'_2,j'_3}
A_{j_1,j'_2}B_{i'_2,i_3}X_{j_3,j'_3}C_{i'_3,i_2}D_{j_2,j'_1}
\iinner{i_1}{i'_1}\iinner{i_2}{i'_3}\iinner{i_3}{i'_2}
\iinner{j_1}{j'_1}\iinner{j_2}{j'_3}\iinner{j_3}{j'_2}\\
&&= \Tr{DAX}\Tr{B}\Tr{C}\iinner{i_1}{i'_1}.
\end{eqnarray}
(23). If $(\pi,\sigma)=((23),(123))$, then
\begin{eqnarray}
&&\sum_{i_2,i_3,j_1,j_2,j_3,i'_2,i'_3,j'_1,j'_2,j'_3}
A_{j_1,j'_2}B_{i'_2,i_3}X_{j_3,j'_3}C_{i'_3,i_2}D_{j_2,j'_1}
\iinner{i_1}{i'_1}\iinner{i_2}{i'_3}\iinner{i_3}{i'_2}
\iinner{j_1}{j'_2}\iinner{j_2}{j'_3}\iinner{j_3}{j'_1}\\
&&= \Tr{A}\Tr{DX}\Tr{B}\Tr{C}\iinner{i_1}{i'_1}.
\end{eqnarray}
(24). If $(\pi,\sigma)=((23),(132))$, then
\begin{eqnarray}
&&\sum_{i_2,i_3,j_1,j_2,j_3,i'_2,i'_3,j'_1,j'_2,j'_3}
A_{j_1,j'_2}B_{i'_2,i_3}X_{j_3,j'_3}C_{i'_3,i_2}D_{j_2,j'_1}
\iinner{i_1}{i'_1}\iinner{i_2}{i'_3}\iinner{i_3}{i'_2}
\iinner{j_1}{j'_3}\iinner{j_2}{j'_1}\iinner{j_3}{j'_2}\\
&&= \Tr{D}\Tr{AX}\Tr{B}\Tr{C}\iinner{i_1}{i'_1}.
\end{eqnarray}
(25). If $(\pi,\sigma)=((123),(1))$, then
\begin{eqnarray}
&&\sum_{i_2,i_3,j_1,j_2,j_3,i'_2,i'_3,j'_1,j'_2,j'_3}
A_{j_1,j'_2}B_{i'_2,i_3}X_{j_3,j'_3}C_{i'_3,i_2}D_{j_2,j'_1}
\iinner{i_1}{i'_2}\iinner{i_2}{i'_3}\iinner{i_3}{i'_1}
\iinner{j_1}{j'_1}\iinner{j_2}{j'_2}\iinner{j_3}{j'_3}\\
&&= \Tr{AD}\Tr{X}\Tr{C}\Innerm{i_1}{B}{i'_1}.
\end{eqnarray}
(26). If $(\pi,\sigma)=((123),(12))$, then
\begin{eqnarray}
&&\sum_{i_2,i_3,j_1,j_2,j_3,i'_2,i'_3,j'_1,j'_2,j'_3}
A_{j_1,j'_2}B_{i'_2,i_3}X_{j_3,j'_3}C_{i'_3,i_2}D_{j_2,j'_1}
\iinner{i_1}{i'_2}\iinner{i_2}{i'_3}\iinner{i_3}{i'_1}
\iinner{j_1}{j'_2}\iinner{j_2}{j'_1}\iinner{j_3}{j'_3}\\
&&= \Tr{A}\Tr{D}\Tr{X}\Tr{C}\Innerm{i_1}{B}{i'_1}.
\end{eqnarray}
(27). If $(\pi,\sigma)=((123),(13))$, then
\begin{eqnarray}
&&\sum_{i_2,i_3,j_1,j_2,j_3,i'_2,i'_3,j'_1,j'_2,j'_3}
A_{j_1,j'_2}B_{i'_2,i_3}X_{j_3,j'_3}C_{i'_3,i_2}D_{j_2,j'_1}
\iinner{i_1}{i'_2}\iinner{i_2}{i'_3}\iinner{i_3}{i'_1}
\iinner{j_1}{j'_3}\iinner{j_2}{j'_2}\iinner{j_3}{j'_1}\\
&&= \Tr{ADX}\Tr{C}\Innerm{i_1}{B}{i'_1}.
\end{eqnarray}
(28). If $(\pi,\sigma)=((123),(23))$, then
\begin{eqnarray}
&&\sum_{i_2,i_3,j_1,j_2,j_3,i'_2,i'_3,j'_1,j'_2,j'_3}
A_{j_1,j'_2}B_{i'_2,i_3}X_{j_3,j'_3}C_{i'_3,i_2}D_{j_2,j'_1}
\iinner{i_1}{i'_2}\iinner{i_2}{i'_3}\iinner{i_3}{i'_1}
\iinner{j_1}{j'_1}\iinner{j_2}{j'_3}\iinner{j_3}{j'_2}\\
&&= \Tr{DAX}\Tr{C}\Innerm{i_1}{B}{i'_1}.
\end{eqnarray}
(29). If $(\pi,\sigma)=((123),(123))$, then
\begin{eqnarray}
&&\sum_{i_2,i_3,j_1,j_2,j_3,i'_2,i'_3,j'_1,j'_2,j'_3}
A_{j_1,j'_2}B_{i'_2,i_3}X_{j_3,j'_3}C_{i'_3,i_2}D_{j_2,j'_1}
\iinner{i_1}{i'_2}\iinner{i_2}{i'_3}\iinner{i_3}{i'_1}
\iinner{j_1}{j'_2}\iinner{j_2}{j'_3}\iinner{j_3}{j'_1}\\
&&= \Tr{A}\Tr{DX}\Tr{C}\Innerm{i_1}{B}{i'_1}.
\end{eqnarray}
(30). If $(\pi,\sigma)=((123),(132))$, then
\begin{eqnarray}
&&\sum_{i_2,i_3,j_1,j_2,j_3,i'_2,i'_3,j'_1,j'_2,j'_3}
A_{j_1,j'_2}B_{i'_2,i_3}X_{j_3,j'_3}C_{i'_3,i_2}D_{j_2,j'_1}
\iinner{i_1}{i'_2}\iinner{i_2}{i'_3}\iinner{i_3}{i'_1}
\iinner{j_1}{j'_3}\iinner{j_2}{j'_1}\iinner{j_3}{j'_2}\\
&&= \Tr{D}\Tr{AX}\Tr{C}\Innerm{i_1}{B}{i'_1}.
\end{eqnarray}
(31). If $(\pi,\sigma)=((132),(1))$, then
\begin{eqnarray}
&&\sum_{i_2,i_3,j_1,j_2,j_3,i'_2,i'_3,j'_1,j'_2,j'_3}
A_{j_1,j'_2}B_{i'_2,i_3}X_{j_3,j'_3}C_{i'_3,i_2}D_{j_2,j'_1}
\iinner{i_1}{i'_3}\iinner{i_2}{i'_1}\iinner{i_3}{i'_2}
\iinner{j_1}{j'_1}\iinner{j_2}{j'_2}\iinner{j_3}{j'_3}\\
&&= \Tr{AD}\Tr{X}\Tr{B}\Innerm{i_1}{C}{i'_1}.
\end{eqnarray}
(32). If $(\pi,\sigma)=((132),(12))$, then
\begin{eqnarray}
&&\sum_{i_2,i_3,j_1,j_2,j_3,i'_2,i'_3,j'_1,j'_2,j'_3}
A_{j_1,j'_2}B_{i'_2,i_3}X_{j_3,j'_3}C_{i'_3,i_2}D_{j_2,j'_1}
\iinner{i_1}{i'_3}\iinner{i_2}{i'_1}\iinner{i_3}{i'_2}
\iinner{j_1}{j'_2}\iinner{j_2}{j'_1}\iinner{j_3}{j'_3}\\
&&= \Tr{A}\Tr{D}\Tr{X}\Tr{B}\Innerm{i_1}{C}{i'_1}.
\end{eqnarray}
(33). If $(\pi,\sigma)=((132),(13))$, then
\begin{eqnarray}
&&\sum_{i_2,i_3,j_1,j_2,j_3,i'_2,i'_3,j'_1,j'_2,j'_3}
A_{j_1,j'_2}B_{i'_2,i_3}X_{j_3,j'_3}C_{i'_3,i_2}D_{j_2,j'_1}
\iinner{i_1}{i'_3}\iinner{i_2}{i'_1}\iinner{i_3}{i'_2}
\iinner{j_1}{j'_3}\iinner{j_2}{j'_2}\iinner{j_3}{j'_1}\\
&&= \Tr{ADX}\Tr{B}\Innerm{i_1}{C}{i'_1}.
\end{eqnarray}
(34). If $(\pi,\sigma)=((132),(23))$, then
\begin{eqnarray}
&&\sum_{i_2,i_3,j_1,j_2,j_3,i'_2,i'_3,j'_1,j'_2,j'_3}
A_{j_1,j'_2}B_{i'_2,i_3}X_{j_3,j'_3}C_{i'_3,i_2}D_{j_2,j'_1}
\iinner{i_1}{i'_3}\iinner{i_2}{i'_1}\iinner{i_3}{i'_2}
\iinner{j_1}{j'_1}\iinner{j_2}{j'_3}\iinner{j_3}{j'_2}\\
&&= \Tr{DAX}\Tr{B}\Innerm{i_1}{C}{i'_1}.
\end{eqnarray}
(35). If $(\pi,\sigma)=((132),(123))$, then
\begin{eqnarray}
&&\sum_{i_2,i_3,j_1,j_2,j_3,i'_2,i'_3,j'_1,j'_2,j'_3}
A_{j_1,j'_2}B_{i'_2,i_3}X_{j_3,j'_3}C_{i'_3,i_2}D_{j_2,j'_1}
\iinner{i_1}{i'_3}\iinner{i_2}{i'_1}\iinner{i_3}{i'_2}
\iinner{j_1}{j'_2}\iinner{j_2}{j'_3}\iinner{j_3}{j'_1}\\
&&= \Tr{A}\Tr{DX}\Tr{B}\Innerm{i_1}{C}{i'_1}.
\end{eqnarray}
(36). If $(\pi,\sigma)=((132),(132))$, then
\begin{eqnarray}
&&\sum_{i_2,i_3,j_1,j_2,j_3,i'_2,i'_3,j'_1,j'_2,j'_3}
A_{j_1,j'_2}B_{i'_2,i_3}X_{j_3,j'_3}C_{i'_3,i_2}D_{j_2,j'_1}
\iinner{i_1}{i'_3}\iinner{i_2}{i'_1}\iinner{i_3}{i'_2}
\iinner{j_1}{j'_3}\iinner{j_2}{j'_1}\iinner{j_3}{j'_2}\\
&&= \Tr{D}\Tr{AX}\Tr{B}\Innerm{i_1}{C}{i'_1}.
\end{eqnarray}
Combing the 36 cases together gives the desired conclusion:
\begin{eqnarray}
\int UAU^\dagger BUXU^\dagger CUDU^\dagger \dif\mu(U)=
\mu_1\cdot\I_d+\mu_2\cdot BC+\mu_3\cdot CB + \mu_4\cdot B +
\mu_5\cdot C,
\end{eqnarray}
where the coefficients $\mu_j(j=1,\ldots,5)$ are given below:
\begin{eqnarray}
\mu_1&:=& \mathrm{Wg}(1,1,1)\Tr{AD}\Tr{X}\Tr{BC} +
\mathrm{Wg}(2,1)\Tr{A}\Tr{D}\Tr{X}\Tr{BC}\notag\\
&&+\mathrm{Wg}(2,1)\Tr{ADX}\Tr{BC}+\mathrm{Wg}(2,1)\Tr{DAX}\Tr{BC}\notag\\
&&+\mathrm{Wg}(3)\Tr{A}\Tr{DX}\Tr{BC}+\mathrm{Wg}(3)\Tr{D}\Tr{AX}\Tr{BC}\notag\\
&&+\mathrm{Wg}(2,1)\Tr{AD}\Tr{X}\Tr{B}\Tr{C}+\mathrm{Wg}(3)\Tr{A}\Tr{D}\Tr{X}\Tr{B}\Tr{C}\notag\\
&&+\mathrm{Wg}(3)\Tr{ADX}\Tr{B}\Tr{C}+\mathrm{Wg}(1,1,1)\Tr{DAX}\Tr{B}\Tr{C}\notag\\
&&+\mathrm{Wg}(2,1)\Tr{A}\Tr{DX}\Tr{B}\Tr{C}+\mathrm{Wg}(2,1)\Tr{D}\Tr{AX}\Tr{B}\Tr{C},\\
\mu_2&:=& \mathrm{Wg}(2,1)\Tr{AD}\Tr{X} +
\mathrm{Wg}(1,1,1)\Tr{A}\Tr{D}\Tr{X}+\mathrm{Wg}(3)\Tr{ADX}\notag\\
&&+\mathrm{Wg}(3)\Tr{DAX}+\mathrm{Wg}(2,1)\Tr{A}\Tr{DX}+\mathrm{Wg}(2,1)\Tr{D}\Tr{AX},\\
\mu_3&:=& \mathrm{Wg}(2,1)\Tr{AD}\Tr{X} +
\mathrm{Wg}(3)\Tr{A}\Tr{D}\Tr{X}+\mathrm{Wg}(1,1,1)\Tr{ADX}\notag\\
&&+\mathrm{Wg}(3)\Tr{DAX}+\mathrm{Wg}(2,1)\Tr{A}\Tr{DX}+\mathrm{Wg}(2,1)\Tr{D}\Tr{AX},\\
\mu_4&:=& \mathrm{Wg}(3)\Tr{AD}\Tr{X}\Tr{C} +
\mathrm{Wg}(2,1)\Tr{A}\Tr{D}\Tr{X}\Tr{C} \notag\\
&&+ \mathrm{Wg}(2,1)\Tr{ADX}\Tr{C}+\mathrm{Wg}(2,1)\Tr{DAX}\Tr{C}\notag\\
&&+\mathrm{Wg}(1,1,1)\Tr{A}\Tr{DX}\Tr{C}+\mathrm{Wg}(3)\Tr{D}\Tr{AX}\Tr{C},\\
\mu_5&:=& \mathrm{Wg}(3)\Tr{AD}\Tr{X}\Tr{B} +
\mathrm{Wg}(2,1)\Tr{A}\Tr{D}\Tr{X}\Tr{B} \notag\\
&&+ \mathrm{Wg}(2,1)\Tr{ADX}\Tr{B}+\mathrm{Wg}(2,1)\Tr{DAX}\Tr{B}\notag\\
&&+\mathrm{Wg}(3)\Tr{A}\Tr{DX}\Tr{B}+\mathrm{Wg}(1,1,1)\Tr{D}\Tr{AX}\Tr{B}.
\end{eqnarray}
We are done.
\end{proof}

\begin{remark}
$\mathrm{Wg}:=\frac1{(k!)^2}\sum_{\lambda\vdash
k}\frac{\dim(\bP_\lambda)^2}{\dim(\bQ_\lambda)}\chi_\lambda$ is
called the \emph{Weingarten function} \cite{Benoit}. In particular,
for $\lambda\vdash 3$, we have:
\begin{eqnarray}
\mathrm{Wg}(1,1,1) &=& \frac{d^2-2}{d(d^2-1)(d^2-4)}=\Pa{d-\frac2d}\cdot\frac1{N_d},\\
\mathrm{Wg}(2,1) &=& -\frac1{(d^2-1)(d^2-4)}=(-1)\cdot\frac1{N_d},\\
\mathrm{Wg}(3) &=& \frac{2}{d(d^2-1)(d^2-4)} =
\frac2d\cdot\frac1{N_d},
\end{eqnarray}
where $N_d=(d^2-1)(d^2-4)$. With these coefficients, we then have
\begin{eqnarray}
N_d\mu_1&:=& \Pa{d-\frac2d}\Tr{AD}\Tr{X}\Tr{BC} +\Pa{d-\frac2d}\Tr{DAX}\Tr{B}\Tr{C}+\frac2d\Tr{A}\Tr{DX}\Tr{BC}\notag\\
&&+\frac2d\Tr{D}\Tr{AX}\Tr{BC}+\frac2d\Tr{A}\Tr{D}\Tr{X}\Tr{B}\Tr{C}+\frac2d\Tr{ADX}\Tr{B}\Tr{C}\notag\\
&&- \Tr{A}\Tr{D}\Tr{X}\Tr{BC}-\Tr{ADX}\Tr{BC}-\Tr{DAX}\Tr{BC}\notag\\
&&-\Tr{AD}\Tr{X}\Tr{B}\Tr{C}-\Tr{A}\Tr{DX}\Tr{B}\Tr{C}-\Tr{D}\Tr{AX}\Tr{B}\Tr{C},\\
N_d\mu_2&:=&
\Pa{d-\frac2d}\Tr{A}\Tr{D}\Tr{X}+\frac2d\Tr{ADX}+\frac2d\Tr{DAX}\notag\\
&&-\Tr{AD}\Tr{X} -\Tr{A}\Tr{DX}-\Tr{D}\Tr{AX},\\
N_d\mu_3&:=& \frac2d\Tr{DAX} +
\frac2d\Tr{A}\Tr{D}\Tr{X}+\Pa{d-\frac2d}\Tr{ADX}\notag\\
&&-\Tr{AD}\Tr{X}-\Tr{A}\Tr{DX}-\Tr{D}\Tr{AX},\\
N_d\mu_4&:=& \Pa{d-\frac2d}\Tr{A}\Tr{DX}\Tr{C}+\frac2d\Tr{AD}\Tr{X}\Tr{C} + \frac2d\Tr{D}\Tr{AX}\Tr{C}\notag\\
&&- \Tr{A}\Tr{D}\Tr{X}\Tr{C} - \Tr{ADX}\Tr{C} - \Tr{DAX}\Tr{C},\\
N_d\mu_5&:=& \frac2d\Tr{AD}\Tr{X}\Tr{B} +\frac2d\Tr{A}\Tr{DX}\Tr{B}+\Pa{d-\frac2d}\Tr{D}\Tr{AX}\Tr{B} \notag\\
&&- \Tr{A}\Tr{D}\Tr{X}\Tr{B}- \Tr{ADX}\Tr{B} - \Tr{DAX}\Tr{B}.
\end{eqnarray}
\end{remark}

\subsection*{B. The proof of Theorem~\ref{th:A1}}

By Lemma~\ref{lem:sixth-moment},
\begin{eqnarray*}
\int U^\dagger M_{ij}UTU^\dagger M_{jl}UTU^\dagger M_{li}U\dif\mu(U)
= f(i,j,l)\cdot\I_d + g(i,j,l)\cdot T + h(i,j,l)\cdot T^2,
\end{eqnarray*}
where
\begin{eqnarray*}
f(i,j,l) &=& \mathrm{Wg}(1,1,1)d^2_A\delta_{jl}\Tr{T^2} +
\mathrm{Wg}(2,1)d^3_A\delta_{ij}\delta_{jl}\delta_{li}\Tr{T^2}+\mathrm{Wg}(2,1)d_A\delta_{ij}\delta_{jl}\delta_{li}\Tr{T^2}\\
&&+\mathrm{Wg}(2,1)d_A\Tr{T^2}+\mathrm{Wg}(3)d^2_A\delta_{ij}\Tr{T^2}+\mathrm{Wg}(3)d^2_A\delta_{il}\Tr{T^2}\\
&&+\mathrm{Wg}(2,1)d^2_A\delta_{jl}\Tr{T}^2+\mathrm{Wg}(3)d^3_A\delta_{ij}\delta_{jl}\delta_{li}\Tr{T}^2+\mathrm{Wg}(3)d_A\delta_{ij}\delta_{jl}\delta_{li}\Tr{T}^2\\
&&+\mathrm{Wg}(1,1,1)d_A\Tr{T}^2+\mathrm{Wg}(2,1)d^2_A\delta_{ij}\Tr{T}^2+\mathrm{Wg}(2,1)d^2_A\delta_{il}\Tr{T}^2,\\
g(i,j,l) &=& \mathrm{Wg}(3)d^2_A\delta_{jl}\Tr{T}  +
\mathrm{Wg}(2,1)d^3_A\delta_{ij}\delta_{jl}\delta_{li}\Tr{T} + \mathrm{Wg}(2,1)d_A\delta_{ij}\delta_{jl}\delta_{li}\Tr{T}\\
&&+\mathrm{Wg}(2,1)d_A\Tr{T}+\mathrm{Wg}(1,1,1)d^2_A\delta_{ij}\Tr{T}+\mathrm{Wg}(3)d^2_A\delta_{il}\Tr{T}\\
&&+\mathrm{Wg}(3)d^2_A\delta_{jl}\Tr{T} +
\mathrm{Wg}(2,1)d^3_A\delta_{ij}\delta_{jl}\delta_{li}\Tr{T} + \mathrm{Wg}(2,1)d_A\delta_{ij}\delta_{jl}\delta_{li}\Tr{T}\\
&&+\mathrm{Wg}(2,1)d_A\Tr{T}+\mathrm{Wg}(3)d^2_A\delta_{ij}\Tr{T}+\mathrm{Wg}(1,1,1)d^2_A\delta_{il}\Tr{T},\\
h(i,j,l) &=& \mathrm{Wg}(2,1)d^2_A\delta_{jl} +
\mathrm{Wg}(1,1,1)d^3_A\delta_{ij}\delta_{jl}\delta_{li}+\mathrm{Wg}(3)d_A\delta_{ij}\delta_{jl}\delta_{li}\\
&&+\mathrm{Wg}(3)d_A+\mathrm{Wg}(2,1)d^2_A\delta_{ij}+\mathrm{Wg}(2,1)d^2_A\delta_{il},\notag\\
&& +\mathrm{Wg}(2,1)d^2_A\delta_{jl} +
\mathrm{Wg}(3)d^3_A\delta_{ij}\delta_{jl}\delta_{li}+\mathrm{Wg}(1,1,1)d_A\delta_{ij}\delta_{jl}\delta_{li}\\
&&+\mathrm{Wg}(3)d_A+\mathrm{Wg}(2,1)d^2_A\delta_{ij}+\mathrm{Wg}(2,1)d^2_A\delta_{il}.
\end{eqnarray*}
Note that the meaning of the notation $\mathrm{Wg}(*)$ can be found
in the Appendix. Thus for $f:=\sum^{d_B}_{i,j,l=1}f(i,j,l)$,
$g:=\sum^{d_B}_{i,j,l=1}g(i,j,l)$, and
$h:=\sum^{d_B}_{i,j,l=1}h(i,j,l)$, we have
\begin{eqnarray*}
f &=& \Pa{[\mathrm{Wg}(1,1,1)+2\mathrm{Wg}(3)]d^2+
\mathrm{Wg}(2,1)(dd^2_A+d+dd^2_B)}\Tr{T^2}\\
&&+\Pa{3\mathrm{Wg}(2,1)d^2+\mathrm{Wg}(3)(dd^2_A+d)+\mathrm{Wg}(1,1,1)dd^2_B}\Tr{T}^2,\\
g &=& \Pa{[4\mathrm{Wg}(3) + 2\mathrm{Wg}(1,1,1)]d^2+
2\mathrm{Wg}(2,1)(dd^2_A + d +dd^2_B)}\Tr{T},\\
h &=& 6\mathrm{Wg}(2,1)d^2 +
[\mathrm{Wg}(1,1,1)+\mathrm{Wg}(3)]dd^2_A+[\mathrm{Wg}(1,1,1)+\mathrm{Wg}(3)]d+2\mathrm{Wg}(3)dd^2_B.
\end{eqnarray*}
Hence, for $T=\I_A\ot\I_B/d_B - \rho_{AB}$, $\Tr{T}=d_A-1$ and
$\Tr{T^2}=\frac{d_A-2}{d_B}+\Tr{\rho^2_{AB}}$, then
\begin{eqnarray}
f&=& \frac{d(d^2-d^2_A-d^2_B+1)}{(d^2-1)(d^2-4)}\Tr{T^2} + \frac{d^2(d^2_B-3)+2(d^2_A-d^2_B+1)}{(d^2-1)(d^2-4)}\Tr{T}^2,\\
&=&\frac{d_A(d^2_A-1)(d^2_B-1)}{(d^2-1)(d^2-4)}\Pa{d_A+d_B\Tr{\rho^2_{AB}}-2}\\
&&+\frac{(d^2-2d^2_A-2)(d^2_B-1)}{(d^2-1)(d^2-4)}(d_A-1)^2,\\
g&=&\frac{2d(d^2-d^2_A-d^2_B+1)}{(d^2-1)(d^2-4)}\Tr{T} = \frac{2d(d_A-1)(d^2_A-1)(d^2_B-1)}{(d^2-1)(d^2-4)},\\
h&=&\frac{d^2(d^2_A-5)+4d^2_B}{(d^2-1)(d^2-4)} =
\frac{(d^2_A-1)(d^2_A-4)d^2_B}{(d^2-1)(d^2-4)}.
\end{eqnarray}
Therefore,
\begin{eqnarray}
\int U^\dagger\Br{\Gamma(UTU^\dagger)}^2U\dif\mu(U) =f\cdot\I_d +
g\cdot T + h\cdot T^2,
\end{eqnarray}
implying that
\begin{eqnarray}
a_2 &=& f + g\cdot \Tr{\rho_{AB}T} + h\cdot \Tr{\rho_{AB}T^2}\notag\\
&=&f + g\cdot\Pa{\frac1{d_B}-\Tr{\rho^2_{AB}}} + h\cdot
\Pa{\frac1{d^2_B} - \frac2{d_B}\Tr{\rho^2_{AB}}
+\Tr{\rho^3_{AB}}}\notag\\
&=& \Pa{f + g\frac1{d_B} + h\frac1{d^2_B}} -
\Pa{g+h\frac2{d_B}}\Tr{\rho^2_{AB}} + h\Tr{\rho^3_{AB}}.
\end{eqnarray}
We make further analysis of the term $a_n$ although we have already
known the fact that $\lim_{n\to \infty}a_n=0$. Let
$\varphi_\rho(X)=\Tr{\rho X}$. Apparently, $\varphi_\rho$ is a
positive unital linear mapping (in fact, it is a positive unital
linear functional from the set of $n\times n$ Hermitian matrices to
$\real$). It is easily seen that $f(x)=x^n$ is a convex function
from $\real$ to $\real$. By using \cite[Theorem~4.15,
pp147]{Hiai2014}, we see that
\begin{eqnarray*}
a_n &=& \varphi_\rho\Pa{\int
\Br{U^\dagger\Gamma(UTU^\dagger)U}^n\dif\mu(U)} =
\varphi_\rho\Pa{\int f(U^\dagger\Gamma(UTU^\dagger)U)\dif\mu(U)}\\
&=& \int \dif\mu(U)(\varphi_\rho\circ
f)(U^\dagger\Gamma(UTU^\dagger)U)\geqslant \int \dif\mu(U)
f[\varphi_\rho(U^\dagger\Gamma(UTU^\dagger)U)]\\
&\geqslant&\int \dif\mu(U) \Br{\Tr{\rho
U^\dagger\Gamma(UTU^\dagger)U}}^n\geqslant
\Pa{\Tr{\rho\int\dif\mu(U)U^\dagger\Gamma(UTU^\dagger)U}}^n.
\end{eqnarray*}
By \eqref{eq:1st-matrix-int}, we have
\begin{eqnarray*}
\Tr{\rho\int \dif\mu(U)U^\dagger\Gamma(UTU^\dagger)U} =
\frac{d_A-1}{d^2-1}\Br{(1+dd_B) - (d+d_B)\Tr{\rho^2_{AB}}}=a_1.
\end{eqnarray*}
Thus
\begin{eqnarray}
a_n &\geqslant& a^n_1.
\end{eqnarray}
Then
$$
a_1=\frac{(d_A-1)(d_B-1)}{d+1}+\frac{(d_A-1)(d+d_B)}{d^2-1}\rS_L(\rho_{AB}),
$$
where $\rS_L(\rho_{AB})\in\Br{0,1-\frac1d}$. Clearly
\begin{eqnarray}
\frac{(d_A-1)(d_B-1)}{d+1}\leqslant a_1\leqslant
\frac{(d_A-1)\Pa{1+\frac1{d_A}}}{d+1}<1.
\end{eqnarray}
Now
\begin{eqnarray}
\int\rS(\rho'_A)\dif\mu(U)=\sum^\infty_{n=1}\frac{a_n}n\geqslant
\sum^\infty_{n=1}\frac{a^n_1}n = -\ln(1-a_1).
\end{eqnarray}
We see from the above lower bound, i.e., $-\ln(1-a_1)$, that when
the purity of $\rho_{AB}$ decreases, $a_1$ increases. Hence such
lower bound will be tighter.

\subsection*{C. The proof of Proposition~\ref{prop:ave-corr}}

Clearly, the first inequality is easily obtained
\begin{eqnarray}
\int I(A:B)_{\rho'}\dif\mu(U) &\geqslant& \rS\Pa{\rho_{AB}||\int \ln
\Br{U^\dagger
(\rho'_A\ot\rho'_B)U}\dif\mu(U)}\\
&=& \rS\Pa{\rho_{AB}||c_0\cdot\I_d + c_1\cdot\rho_{AB} +
c_2\cdot\rho^2_{AB}}.
\end{eqnarray}
In order to show the second inequality, note that, for any two
density matrices $\rho$ and $\sigma$,
\begin{eqnarray}
\rF(\rho,\sigma)\geqslant \Tr{\sqrt{\rho}\sqrt{\sigma}}\geqslant
\Tr{\rho\sigma}.
\end{eqnarray}
Then, for $\rho'_{AB}=U\rho_{AB}U^\dagger$, we have
\begin{eqnarray}
\rF(\rho'_{AB},\rho'_A\ot\rho'_B)\geqslant
\Tr{\rho'_{AB}\rho'_A\ot\rho'_B} =
\Tr{\rho_{AB}U^\dagger\Pa{\rho'_A\ot\rho'_B} U},
\end{eqnarray}
implying that
\begin{eqnarray}
\int\rF(\rho'_{AB},\rho'_A\ot\rho'_B)\dif\mu(U)\geqslant
\int\Tr{\rho_{AB}U^\dagger\Pa{\rho'_A\ot\rho'_B} U}\dif\mu(U) =
c_0+c_1\Tr{\rho^2_{AB}}+c_2\Tr{\rho^3_{AB}}.
\end{eqnarray}
By the concavity of fidelity, we have
\begin{eqnarray}
\int\rF(\rho'_{AB},\rho'_A\ot\rho'_B)\dif\mu(U) =
\int\rF(\rho_{AB},U^\dagger(\rho'_A\ot\rho'_B)U)\dif\mu(U)\leqslant
\rF(\rho_{AB},c_0\cdot\I_d + c_1\cdot\rho_{AB} +
c_2\cdot\rho^2_{AB}).
\end{eqnarray}
Therefore
\begin{eqnarray}
c_0+c_1\Tr{\rho^2_{AB}}+c_2\Tr{\rho^3_{AB}}\leqslant\int\rF(\rho'_{AB},\rho'_A\ot\rho'_B)\dif\mu(U)\leqslant
\rF(\rho_{AB},c_0\cdot\I_d + c_1\cdot\rho_{AB} +
c_2\cdot\rho^2_{AB}).
\end{eqnarray}
This completes the proof.

\subsection*{D. The proof of Theorem~\ref{th:A2}}

Note that $\rho'_{AB}=U\rho_{AB}U^\dagger$. We see from
\eqref{eq:averageQMI-lower} that
\begin{eqnarray}
\int I(A:B)_{\rho'}\dif\mu(U)\geqslant\rS\Pa{\rho_{AB}
||c_0\cdot\I_d + c_1\cdot\rho_{AB} + c_2\cdot\rho^2_{AB}}.
\end{eqnarray}
Since $I(A:B)_{\rho'} = \rS(\rho'_A)+\rS(\rho'_B)-\rS(\rho_{AB})$,
it follows that
\begin{eqnarray}
\int
\Pa{\rS(\rho'_A)+\rS(\rho'_B)-\rS(\rho_{AB})}\dif\mu(U)\geqslant\rS\Pa{\rho_{AB}
||c_0\cdot\I_d + c_1\cdot\rho_{AB} + c_2\cdot\rho^2_{AB}}.
\end{eqnarray}
That is,
\begin{eqnarray}
\langle S_A+S_B\rangle\geqslant \rS(\rho_{AB})+\rS\Pa{\rho_{AB}
||c_0\cdot\I_d + c_1\cdot\rho_{AB} + c_2\cdot\rho^2_{AB}}.
\end{eqnarray}
This confirms the first inequality. Besides, by
Eq.~\eqref{eq:new-id}, we get
\begin{eqnarray}
&&\rS\Pa{c_0\cdot\I_d + c_1\cdot\rho_{AB} + c_2\cdot\rho^2_{AB}} =
\rS\Pa{\int U^\dagger(\rho'_A\ot \rho'_B)U \dif\mu(U)}\\
&&\geqslant\int \rS\Pa{U^\dagger(\rho'_A\ot
\rho'_B)U}\dif\mu(U)=\int\rS(\rho'_A)\dif\mu(U)+\int\rS(\rho'_B)\dif\mu(U).
\end{eqnarray}
This confirms the second inequality. Therefore we have
\begin{eqnarray*}
\rS(\rho_{AB})+\rS\Pa{\rho_{AB} ||c_0\cdot\I_d + c_1\cdot\rho_{AB} +
c_2\cdot\rho^2_{AB}}\leqslant \rS\Pa{c_0\cdot\I_d +
c_1\cdot\rho_{AB} + c_2\cdot\rho^2_{AB}}.
\end{eqnarray*}
This is equivalent to the following
\begin{eqnarray*}
\rS\Pa{\rho_{AB} ||c_0\cdot\I_d + c_1\cdot\rho_{AB} +
c_2\cdot\rho^2_{AB}}\leqslant \rS\Pa{c_0\cdot\I_d +
c_1\cdot\rho_{AB} + c_2\cdot\rho^2_{AB}} - \rS(\rho_{AB}).
\end{eqnarray*}
Next we show that $\rS\Pa{c_0\cdot\I_d + c_1\cdot\rho_{AB} +
c_2\cdot\rho^2_{AB}} = \rS(\rho_{AB})$ if and only if $\rho_{AB}$ is
maximally mixed state. Clearly if $\rho_{AB}$ is maximally mixed
state, i.e., $\rS(\rho_{AB})=\ln(d)$, since $\rS\Pa{c_0\cdot\I_d +
c_1\cdot\rho_{AB} + c_2\cdot\rho^2_{AB}} -
\rS(\rho_{AB})\geqslant0$, then $\rS\Pa{c_0\cdot\I_d +
c_1\cdot\rho_{AB} + c_2\cdot\rho^2_{AB}}\geqslant \ln(d)$,
apparently $\rS\Pa{c_0\cdot\I_d + c_1\cdot\rho_{AB} +
c_2\cdot\rho^2_{AB}}\leqslant \ln(d)$, thus
$\rS(\rho_{AB})=\rS\Pa{c_0\cdot\I_d + c_1\cdot\rho_{AB} +
c_2\cdot\rho^2_{AB}}=\ln(d)$, the maximum of von Neuman entropy.
Reversely, if $\rS\Pa{c_0\cdot\I_d + c_1\cdot\rho_{AB} +
c_2\cdot\rho^2_{AB}} = \rS(\rho_{AB})$, then by the obtained
inequality, we have
$$
\rS\Pa{\rho_{AB} ||c_0\cdot\I_d + c_1\cdot\rho_{AB} +
c_2\cdot\rho^2_{AB}}=0,
$$
which holds if and only if $\rho_{AB} = c_0\cdot\I_d +
c_1\cdot\rho_{AB} + c_2\cdot\rho^2_{AB}$. This means that for any
eigenvalue $\lambda(\geqslant0)$ of $\rho_{AB}$ must satisfy that
$$
c_2\lambda^2 +(c_1-1)\lambda+c_0=0.
$$
Solve this equation, we get
$$
\lambda = \frac{(1-c_1)-\sqrt{(1-c_1)^2-4c_0c_2}}{2c_2}=\frac1{d}.
$$
Note that we have dropped another root being larger than one. Thus
$\rho_{AB}$ is maximally mixed state. In fact, we get that
$\rho_{AB} = c_0\cdot\I_d + c_1\cdot\rho_{AB} + c_2\cdot\rho^2_{AB}$
if and only if $\rho_{AB}$ is maximally mixed state.


\end{document}